\newcommand\beq{\begin{equation}}
\newcommand\eeq{\end{equation}}
\newtheorem{theorem}{Theorem}[section]
\newtheorem{proposition}{Proposition}[section]
\newtheorem{remark}{Remark}[section]
\title{Forward Backward SDEs Systems for Utility Maximization in Jump Diffusion Models}
\author{Marina Santacroce\thanks{Dipartimento di Matematica per le Scienze Economiche, Finanziarie ed Attuariali, Universit\`a Cattolica del Sacro Cuore, Via Necchi 9, 20123 Milano, Italy , \textit{E-mail:} \texttt{marina.santacroce@unicatt.it}} , Paola Siri and Barbara Trivellato\thanks{
\noindent  Dipartimento di Scienze Matematiche ``G.L. Lagrange'', Politecnico di Torino, Corso Duca degli Abruzzi 24, 10129 Torino, Italy. \textit{E-mails:}  \texttt{paola.siri@polito.it},
\texttt{barbara.trivellato@polito.it}}}
\date{}
\begin{document}

\maketitle

\begin{abstract}
	
\noindent We consider the classical problem of maximizing the expected utility of terminal net wealth with a final random liability in a simple jump-diffusion model. In the spirit of \cite{HHIRZ} and \cite{ST}, under suitable conditions the optimal strategy is expressed in implicit form in terms of a forward backward system of equations. Some explicit results are presented for the pure jump model and for exponential utilities.

	\bigskip
	
	\noindent {\bf 2000  Mathematics Subject Classification}: 60H10, 91G80, 60G07.

	\noindent {\bf Key words and phrases}: Forward backward stochastic
	differential systems, jump-diffusions,
	utility maximization problem.
	
\end{abstract}

\section{Introduction}

	Portfolio's optimization and hedging problems are two classical problems which have been deeply investigated since the beginning of the seventies.
	Their mathematical formulation in continuous time was pioneered by Merton. In \cite{M} he provides the strategy which maximizes the expected utility of a small investor in closed form by exploiting the Markovianity of the model for standard utilities.
	An alternative approach for not necessarily Markovian models was suggested by \cite{B} and uses convex duality. This methodology has been developed in its full potential in  \cite{P},\cite{HP}, \cite{KLSX}, \cite{CH} and, in its modern form, in \cite{KS} and many others works thereafter.
It is well known that convex duality and martingale methods lead to establish the existence and uniqueness of an optimal strategy in general market models and for non classical utility functions, but they do not provide a constructive characterization.\\
	A constructive form for the optimal portfolio has been determined for classical utilities in quite general market models by using dynamic programming techniques (\cite{HIM}, \cite{MSc}), \cite{MS}). Nevertheless, there exist just very few results for non classical utilities.
	 A first work in this direction is given by \cite{MT03} (see also \cite{MT10}). In the non-Markovian framework, the problem is solved by dynamic programming and the optimal strategy is given in terms of the value function related to the problem and its derivatives. In turn, the value function is characterized as the solution of a backward stochastic partial differential equation under some assumptions imposed on the value function. Another subsequent work with a constructive characterization is \cite{HHIRZ}, where the hedging problem is studied for general utilities in a Brownian model without resorting to dynamic programming. In this article the optimal strategy is described by means of the utility function with its derivatives and the solution of a fully coupled forward backward system.
	In \cite{ST} the same approach has been generalized to a continuous semimartingale setting.
Further developments can be found in more recent works in Brownian settings (see \cite{NS} for a large investor problem with endogenous permanent market impacts and \cite{W} for a stochastic maximum principle algorithm for constrained utilities maximization).\\
    In this paper, we consider a jump diffusion model driven by a Brownian motion and an independent simple Poisson process and study a classical problem of maximization for a general utility using the techniques introduced in \cite{HHIRZ} and \cite{ST}.
	To our present knowledge, it is the first paper dealing with this approach in a model with jumps.\\
	We start by restating Proposition 2.1 in \cite{ST}, which gives the  derivative of the expectation of the terminal net wealth computed in an admissible strategy. The proof follows the same lines as in \cite{ST}, but we add here an extra condition which was missing. The derivative is proved to be null for the optimal strategy and this represents our necessary condition for optimality.\\
	In our main result we use this key condition to express the optimal strategy in terms of a fully coupled forward-backward stochastic differential system.
Besides, the vice versa of this result states that from a suitable solution of the system, the strategy which satisfies a certain optimality equation is admissible and optimal.
As a result, in the jump diffusion model the optimal strategy is found to be implicitely defined by an equation written in terms of the solution of the forward backward system. Such optimality equation can not be explicitely solved even for the exponential utility, where the system decouples (see e.g., \cite{Mo1}, \cite{Mo2}, \cite{Mo3}). However, the optimal strategy admits an explicit expression in the pure jump model. In this setting, for the pure investment problem we also provide sufficient conditions in order to find a solution to the forward backward system, which is not in general an easy task. Moreover, the related optimal strategy turns out to have a more pleasant expression.  
Finally, we prove that these sufficient conditions are satisfied for exponential utilities, where an easy solution to the forward backward system and the corresponding optimal strategy are explicitly written. 
	\\
	The paper is organized as follows.  In Section 2, we introduce the model with the assumptions and state the preliminary proposition, containing the necessary condition for the optimality. In Section 3, we consider the jump diffusion model and give the characterization of the optimal strategy in terms of a fully-coupled forward backward system. 
In Section 4, we deal with the pure jump model. The forward backward characterization of the optimal strategy is followed by the application to the pure investment problem.
	Finally, the results of the previous sections are specified to exponential utilities in Section 5.

\section{Model settings and preliminary results}
\noindent We consider the classical problem of maximizing the expected
utility of terminal net wealth, with a finite time horizon, when the asset price
process is modeled by a jump-diffusion. \\Specifically, the financial model consists of a bank account which pays no interests and of one risky asset whose (discounted) price $S$ is given by the stochastic exponential of a jump diffusion.

\noindent
 On a probability space $(\Omega,{\mathscr F},\mathbb{P})$ we consider two independent processes $W$ and $N$, both defined on $[0,T]$, where $T <+\infty$ is the fixed time horizon: $W$ is a standard (one-dimensional) Brownian motion and $N$ a homogeneous  Poisson process with intensity $\nu>0$.

\noindent
The probability space is equipped  with the natural filtration generated by $W$ and $N$ (completed by the
 $\mathbb{P}$-null sets of ${\mathscr F}$). For the sake of  simplicity, we denote the filtration by  ${\mathcal F}=({\mathcal F_t, t\in[0,T]})$ and we suppose $\mathcal F_T=\mathscr F$.

\noindent
We assume the  price process $S$  defined on the filtered probability space $(\Omega,{\mathscr F}, \mathcal F, \mathbb{P})$ with a dynamics given by

\begin{equation}\label{price}
	dS_t=S_{t^-}(\mu_tdt+\sigma_tdW_t+\eta_tdn_t),\quad S_0>0,
\end{equation}
where $n_t=N_t-\nu t$ defines the compensated Poisson martingale. We suppose  that the coefficients $\mu,\sigma,\eta$  are uniformly bounded predictable processes  and,  to ensure that $S$ is almost surely positive,  $\eta>-1$.
Under these assumptions on the coefficients of the model there is no arbitrage in the market.\\
 We investigate the model in two main different cases: in the first one we consider $\sigma^2>0$, whereas the second concerns the pure jump model, i.e. $\sigma^2 \equiv 0$.

\noindent
We denote by $\pi$ the dollar amount of risky asset in the portfolio and consider the wealth process $X^{\pi}$  which evolves according to the self-financing strategy $\pi$, i.e.
\begin{equation}\label{wealth}
X_t^{\pi}=X_0^{\pi}+\int_0^t \pi_s \frac{dS_s}{S_{s^-}}.
\end{equation}

\noindent
Moreover, let $H$ be a bounded ${\mathcal F}_T$-measurable random variable, representing a
liability due at time $T$. 

\noindent
Given a utility function $U$ defined on the real line, we consider the related problem of maximizing the expected utility of the terminal net wealth
\begin{equation}\label{problem}
to \; maximize\;\;\;\;\; \mathbb{E}[
U(X_T^{\pi}+H)]\;\;\;\;over \;
all\;\;\;\;\pi\in\Pi_x,
\end{equation}
where $x>0$ is a given initial capital and $\Pi_x$ is the set of admissible strategies defined as
\begin{equation}\label{Pi}
\Pi_x=\{\pi\in \mathbb H^2 \, \text{s.t.}\, X_0^{\pi}=x\}, \notag
\end{equation}
with
$$
\mathbb H^2 =\left\{ \vartheta\,\, \text{predictable process s.t.}\,\, \,\mathbb{E}\left(\int_0^T \vartheta^2_t dt\right)<+\infty \right\}.
$$
 Taking into account the boundedness of $\mu,\sigma$ and $\eta$, this choice ensures the square integrability of the wealth process \eqref{wealth}.


\noindent
Throughout the paper, the  utility function $U$ is assumed to be strictly increasing, strictly concave and
three times continuously differentiable. Moreover, the following additional conditions are used as needed:
\begin{description}
\item{(H1)} $\exists \, k>0$ s.t. the absolute risk aversion satisfies
$\text{ARA}(x)=-\frac{U''(x)}{U'(x)}\ge k$, $\forall \, x \in \mathbb R$;
\item{(H2)} $\mathbb{E}[(U^{\prime}(\xi)^2]<+\infty$, for a suitable random variable $\xi$;
\item{(H3)}  $\mathbb{E}[|U(\xi)|]<+\infty$, for a suitable random variable $\xi$.
\end{description}
From now on,  according to \eqref{wealth}  we consider the process $X^{0,h}$, where $X^{0,h}_0=0$ and $h$ is a predictable bounded process, i.e.
	\begin{equation}
		X_t^{0,h}=\int_0^t h_s \frac{dS_s}{S_{s^-}}.\notag
	\end{equation}
	Let us observe that, under our assumptions on the model and due to the boundedness of $h$, the pocess $X^{0,h}$ is square integrable.\\
	\noindent
	The following proposition represents the key starting point for the derivation of the main result and is a restatement of Proposition 2.1 in \cite{ST}. For this reason, we omit the proof. We just remark that we add here an extra condition  which was missing in \cite{ST}.

\begin{proposition}\label{Prop1}
 For $\pi^* \in \Pi_x$, let (H2) hold with  $\xi=X^{\pi^*}_T+H$. Moreover, suppose that for any predictable bounded process  $h$, there exists $\epsilon>0$ such that (H3) is satisfied by $\xi=X^{\pi^*+\epsilon h}_T+H$.

\noindent Then

\begin{equation}
\label{1}
\lim_{\epsilon \downarrow 0 }
\frac{\mathbb{E}[U(X^{\pi^*+\epsilon
h}_T+H)-U(X^{\pi^*}_T+H)]}{\epsilon}=\mathbb{E}\left[U^{\prime}(X^{\pi^*}_T+H)\, X^{0,h}_T\right].
\end{equation}
If, in addition, $\pi^*$ is optimal for the problem \eqref{problem}, then
\begin{equation}
\label{eg1}
\mathbb{E}\left[U^{\prime}(X^{\pi^*}_T+H)\, X^{0,h}_T\right]= 0.
\end{equation}

\end{proposition}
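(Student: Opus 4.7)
By linearity of the stochastic integral in the integrand, $X^{\pi^*+\epsilon h}_T = X^{\pi^*}_T + \epsilon\, X^{0,h}_T$, so setting $Y := X^{\pi^*}_T + H$ and $Z := X^{0,h}_T$ the difference quotient in \eqref{1} takes the form
\begin{equation*}
f_\epsilon := \frac{U(Y+\epsilon Z) - U(Y)}{\epsilon},
\end{equation*}
which converges pointwise to $U'(Y)\,Z$ as $\epsilon \downarrow 0$ by differentiability of $U$. The plan is to establish \eqref{1} via a dominated-convergence argument and then deduce \eqref{eg1} by applying the resulting identity to both $h$ and $-h$.

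The main step is to pass the limit under the expectation, and it rests on the concavity of $U$. For each fixed $\omega$, the map $\epsilon \mapsto U(Y+\epsilon Z)$ is concave, so $\epsilon \mapsto f_\epsilon$ is non-increasing on $(0,\infty)$; in particular $f_\epsilon \le U'(Y)\,Z$ for every $\epsilon > 0$. This upper bound is integrable by Cauchy--Schwarz, since (H2) gives $U'(Y) \in L^2$ and boundedness of $h$ together with that of $\mu$, $\sigma$, $\eta$ gives $Z \in L^2$. For an integrable lower bound I would pick the $\epsilon_0 > 0$ provided by the hypothesis so that $U(Y+\epsilon_0 Z) \in L^1$, and apply the same hypothesis with $h \equiv 0$ to get also $U(Y) \in L^1$; then $f_{\epsilon_0} \in L^1$, and monotonicity yields $f_\epsilon \ge f_{\epsilon_0}$ for all $\epsilon \in (0,\epsilon_0]$. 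Dominated convergence on $(0,\epsilon_0]$ with dominating function $|f_{\epsilon_0}| + |U'(Y)\,Z|$ then gives \eqref{1}.

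For the optimality statement, note that $\pi^* + \epsilon h \in \Pi_x$ for every real $\epsilon$: the initial capital is preserved because $X^{0,h}_0 = 0$, and $h$ bounded implies $\pi^* + \epsilon h \in \mathbb{H}^2$. Optimality of $\pi^*$ then gives $\mathbb{E}[f_\epsilon] \le 0$, hence $\mathbb{E}[U'(Y)\,Z] \le 0$ in the limit; applying the same identity with $-h$ in place of $h$ reverses the inequality and establishes \eqref{eg1}. The principal difficulty of the whole argument is the justification of dominated convergence, and this is precisely where the extra hypothesis the authors flag as missing in \cite{ST} — namely (H3) for a perturbed terminal wealth $X^{\pi^*+\epsilon h}_T + H$ — is used to produce the integrable lower bound.
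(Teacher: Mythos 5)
Your argument is correct and is essentially the proof the authors have in mind: the paper omits the proof and defers to Proposition 2.1 of \cite{ST}, whose argument is exactly this concavity-based monotonicity of the difference quotient (upper bound $U'(Y)Z\in L^1$ via (H2) and Cauchy--Schwarz, lower bound $f_{\epsilon_0}\in L^1$ via the added hypothesis (H3)), followed by dominated convergence and the $\pm h$ perturbation for \eqref{eg1}. You also correctly locate where the ``extra condition missing in \cite{ST}'' enters, namely the integrable lower bound.
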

\begin{remark} Note that the extra integrability condition (H3) required in Proposition \ref{Prop1} can be weakened by requiring only the integrability of the negative part of $U(\xi)$.
\end{remark}

\section{Jump-diffusion model}

\noindent
In this section, we consider the dynamics  of the price process \eqref{price} with $\sigma^2>0$.

\noindent
The next two theorems are the main results of the paper. The first one  characterizes the optimal strategy  in terms of the solution of a forward-backward system of SDEs. The second theorem represents the vice versa and establishes the existence of an optimal strategy.

\begin{theorem}\label{Theo1}
Let $\pi^* \in \Pi_x$ be optimal for the problem \eqref{problem} and  suppose (H1) holds. Under the assumptions of Proposition \ref{Prop1}, there exists a smooth function $G$ such that the following forward-backward system admits a solution $(X,Y,Z,\Psi)$:
\begin{align}
\label{X1}X_t=&x+\int_0^tG(X_{s^-},Y_{s^-}, Z_s, \Psi_s,\Upsilon_s)(\mu_s ds+\sigma_sdW_s+\eta_sdn_s) \\[10pt]
\notag \label{Y1}Y_t=H+&\int_t^T\Big(\frac{U'(\Psi_s+G(X_{s^-},Y_{s^-},Z_s, \Psi_s,\Upsilon_s)\eta_s+X_{s^-}+Y_{s^-})-U'(X_{s^-}+Y_{s^-})}{U''(X_{s^-}+Y_{s^-})} -\Psi_s\Big)\nu ds\\    \notag +&\int_t^T\frac12 \frac{U'''(X_{s^-}+Y_{s^-})}{U''(X_{s^-}+Y_{s^-})}(Z_s+G(X_{s^-},Y_{s^-}, Z_s, \Psi_s,\Upsilon_s)\sigma_s)^2 ds, \\+&\int_t^T G(X_{s^-},Y_{s^-}, Z_s, \Psi_s,\Upsilon_s)\left(\mu_s -\eta_s\nu\right) ds -\int_t^T\!(Z_sdW_s+\Psi_sdn_s),
\end{align}
where  $\Upsilon_t=(\eta_t,\mu_t,\sigma_t)$.

\noindent Moreover,
\begin{equation}
\label{pi*}\pi^*=G(X_{-},Y_{-}, Z, \Psi,\Upsilon)
\end{equation}
 and the related optimal wealth process is equal to $X$.
\end{theorem}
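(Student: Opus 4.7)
The plan is to translate the first-order necessary condition of Proposition \ref{Prop1} into a pointwise identity via martingale representation, to identify the backward component as the deviation from the optimal wealth of the inverse marginal utility of the relevant martingale, and to extract $G$ as the unique solution of an implicit scalar equation for $\pi^*_t$.

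\medskip

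\emph{Pointwise optimality.} Introduce the strictly positive martingale
$M_t:=\mathbb{E}[U'(X^{\pi^*}_T+H)\mid\mathcal{F}_t]$.
By martingale representation in the filtration generated by $W$ and $N$, $dM_t=\alpha_t\,dW_t+\beta_t\,dn_t$ for some predictable $\alpha,\beta$. The identity $\mathbb{E}[M_T X^{0,h}_T]=0$ from \eqref{eg1}, valid for every bounded predictable $h$, becomes after integration by parts in $M_t X^{0,h}_t$ the pointwise necessary condition
\begin{equation*}
M_{t^-}\mu_t+\alpha_t\sigma_t+\beta_t\eta_t\nu=0 \qquad dt\otimes d\mathbb{P}\text{-a.e.}
\end{equation*}

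\medskip

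\emph{Backward process and It\^o.} Since $U\in C^3$ is strictly concave, $(U')^{-1}$ is well defined and smooth. Set $X_t:=X^{\pi^*}_t$ and
$Y_t:=(U')^{-1}(M_t)-X_t$,
so that $Y_T=H$ automatically. Write the forward decomposition of $Y$ as $dY_t=b_t\,dt+Z_t\,dW_t+\Psi_t\,dn_t$. Applying It\^o's formula for jump-diffusions to the identity $M_t=U'(X_t+Y_t)$ identifies
\begin{equation*}
\alpha_t=U''(X_{t^-}+Y_{t^-})(\pi^*_t\sigma_t+Z_t), \quad \beta_t=U'(X_{t^-}+Y_{t^-}+\pi^*_t\eta_t+\Psi_t)-U'(X_{t^-}+Y_{t^-}),
\end{equation*}
while the martingale property of $M$ forces its drift to vanish and, upon solving for $b_t$, produces exactly the negative of the integrand appearing in the backward equation of the statement.

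\medskip

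\emph{Implicit $G$ and conclusion.} Substituting the expressions for $\alpha_t,\beta_t$ into the pointwise optimality condition yields the implicit equation
\begin{equation*}
U''(u_{t^-})\sigma_t^2\pi + \nu\eta_t\bigl[U'(u_{t^-}+\pi\eta_t+\Psi_t)-U'(u_{t^-})\bigr] = -U'(u_{t^-})\mu_t - U''(u_{t^-})Z_t\sigma_t,
\end{equation*}
with $u_{t^-}:=X_{t^-}+Y_{t^-}$, to be solved for $\pi=\pi^*_t$. The main obstacle is the existence and smoothness of the solution $G$: the derivative of the left-hand side in $\pi$ equals $U''(u)\sigma^2+\nu\eta^2 U''(u+\pi\eta+\Psi)$, which is strictly negative and, under $\sigma^2>0$ together with the absolute risk aversion bound (H1), uniformly bounded away from zero; combined with the boundedness of $\mu,\sigma,\eta$, this gives coercivity in $\pi$. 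The implicit function theorem then produces a unique smooth $G(X_{-},Y_{-},Z,\Psi,\Upsilon)$, and plugging $\pi^*_t=G(X_{t^-},Y_{t^-},Z_t,\Psi_t,\Upsilon_t)$ into the dynamics of $X^{\pi^*}$ and into $b_t$ recovers the forward--backward system in the statement, with $X=X^{\pi^*}$.
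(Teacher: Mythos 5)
Your proposal follows essentially the same route as the paper's proof: the same conditional-expectation martingale, the same definition $Y=(U')^{-1}(M)-X^{\pi^*}$, the same identification of the integrands $Z,\Psi$ via It\^o's formula, and the same implicit scalar equation for $\pi^*$ resolved by a global implicit function theorem (the paper states this in its Appendix, since the derivative bound $U''(x+y)\sigma^2+U''(\cdot)\eta^2\nu<\tfrac12 U''(x+y)\sigma^2$ depends on the point $w$ rather than being uniform, as your phrasing suggests). The only steps you gloss over are routine but carried out explicitly in the paper: verifying via Burkholder--Davis--Gundy, Cauchy--Schwarz and Doob that the stochastic-integral terms in the integration by parts are true martingales before taking expectations, and choosing indicator integrands $h$ to pass from the integrated condition \eqref{eg1} to the pointwise optimality condition.
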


\begin{proof}
  The arguments used in this proof are similar to those in \cite{HHIRZ} and \cite{ST}.  We start by defining
  \begin{equation}\label{alfa_def}
  \alpha_t=\mathbb{E}[U^{\prime}(X^{\pi^*}_T+H)|\mathcal{F}_t].\notag
  \end{equation}
Since $\alpha$ is a (square integrable) martingale, it satisfies the backward stochastic differential equation (BSDE)
  \begin{equation}\label{alfa}\alpha_t=U^{\prime}(X^{\pi^*}_T+H)-\int_t^T (\beta_sdW_s+\gamma_sdn_s),\notag
  	\end{equation}
  where $\beta$ and $\gamma$ are respectively the predictable integrand appearing in the martingale representation of $\alpha$, with respect to $W$ and $n$.
We consider the process $Y$, where
$$Y_t=(U')^{-1}(\alpha_t)-X_t^{\pi^*},$$
with final value $Y_T=H$. By Ito's formula we can write
\begin{align}\label{Y}
dY_t=&\frac{1}{U''(U'^{-1}(\alpha_{t^-}))}d\alpha_t-\frac12 \frac{U'''(U'^{-1}(\alpha_{t^-}))}{\big(U''(U'^{-1}(\alpha_{t^-}))\big)^3}d[\alpha]^c_t
+(U'^{-1}(\alpha_{t})-U'^{-1}(\alpha_{t^-})) \nonumber
\\&-\frac{1}{U''(U'^{-1}(\alpha_{t^-}))}\Delta\alpha_t-dX_t^{\pi^*}.
\end{align}
Observing that
$$
U'^{-1}(\alpha_{t})-U'^{-1}(\alpha_{t^-})=U'^{-1}(\gamma_t\Delta N_t+\alpha_{t^-})-U'^{-1}(\alpha_{t^-})=
\left(U'^{-1}(\gamma_t+\alpha_{t^-})-U'^{-1}(\alpha_{t^-})\right) \Delta N_t
$$
\eqref{Y} can be immediately rewritten as
\begin{align*}
	dY_t=&\frac{1}{U''(U'^{-1}(\alpha_{t^-}))}(\beta_t dW_t-\gamma_t \nu dt)-\frac12 \frac{U'''(U'^{-1}(\alpha_{t^-}))}{\big(U''(U'^{-1}(\alpha_{t^-}))\big)^3}\beta^2_tdt\\+&\left(U'^{-1}(\gamma_t+\alpha_{t^-})-U'^{-1}(\alpha_{t^-})\right) dN_t-\pi^*_t(\mu_tdt+\sigma_tdW_t+\eta_tdn_t).\\
\end{align*}
After rearranging the $dt$, $dW_t$ and $dn_t$ terms together and replacing $U'^{-1}(\alpha_t)=X_t^{\pi^*}+Y_t$, we denote the integrands in the martingale terms respectively by $Z_t$ and $\Psi_t$, i.e. we define the processes
 \begin{equation}\label{Zpsi}
Z=\frac{1}{U''(X_{-}^{\pi^*}+Y_{-})}\beta-\pi^*\sigma, \quad  \Psi=-\pi^*\eta+U'^{-1}\left(\gamma+U'(X_{-}^{\pi^*}+Y_{-})\right)-\left(X_{-}^{\pi^*}+Y_{-}\right).	
\end{equation}
Therefore, we can see that $Y$ solves the following BSDE
\begin{align*}
	dY_t=&Z_tdW_t+\Psi_tdn_t+\Big[	 U'^{-1}(\gamma_t+U'(X_{t^-}^{\pi^*}+Y_{t^-}))-(X_{t^-}^{\pi^*}+Y_{t^-})-\frac{1}{U''(X_{t^-}^{\pi^*}+Y_{t^-})}\gamma_t\Big]\nu dt\\& -\Big[\frac12 \frac{U'''(X_{t^-}^{\pi^*}+Y_{t^-})}{\big(U''(X_{t^-}^{\pi^*}+Y_{t^-})\big)^3}\beta^2_t+	\pi^*_t\mu_t\Big]dt,\qquad \qquad Y_T=H.
\end{align*}
If in the previous equation we replace $\beta$ and  $\Psi$ using the expressions in \eqref{Zpsi}, we obtain
\begin{align}\label{dYgamma}
\notag	dY_t=&-\Big[\frac12 \frac{U'''(X_{t^-}^{\pi^*}+Y_{t^-})}{U''(X_{t^-}^{\pi^*}+Y_{t^-})}(Z_t+\pi_t^*\sigma_t)^2-\left( \Psi_t+\pi^*_t\eta_t\right)\nu+\frac{1}{U''(X_{t^-}^{\pi^*}+Y_{t^-})}\gamma_t \nu +	\pi^*_t\mu_t\Big]dt\\ & +Z_tdW_t+\Psi_tdn_t, \quad Y_T=H.
\end{align}

\noindent In order to obtain the BSDE  \eqref{Y1} for $Y$ we apply \eqref{eg1} of Proposition \ref{Prop1}  to $\pi^*$.

\noindent Using the integration by parts formula
we have
\begin{align}\label{U'alfa}
  U^{\prime}(X^{\pi^*}_T+H)  X^{0,h}_T  = & \int_0^TU'(X_{t^-}^{\pi^*}+Y_{t^-}) h_t\big( \mu_tdt+\sigma_t dW_t+\eta_t dn_t\big)  \notag  \\
 &  +\int_0^T  X^{0,h}_{t^-} \, (\beta_tdW_t+\gamma_tdn_t)
  +\int_0^T \frac{h_t}{S_{t^-}}\left(\beta_td[S,W]_t+\gamma_td[S,n]_t\right) \notag \\
= & \int_0^T \alpha_{t^-} h_t\big(\sigma_t dW_t+\eta_t dn_t\big) +\int_0^T X^{0,h}_{t^-} \, (\beta_tdW_t+\gamma_tdn_t)
+\int_0^Th_t\eta_t\gamma_tdn_t \notag  \\
& +\int_0^T h_t\left( U'(X_{t^-}^{\pi^*}+Y_{t^-}) \mu_t +\beta_t\sigma_t+\gamma_t\eta_t \nu\right)dt.
\end{align}
We now check that the first three integrals in \eqref{U'alfa} are martingales.\\
For the first integral we have
\begin{align*}
\mathbb{E}\left(\underset{0\le t\le T}\sup\left|\int_0^t \alpha_{s^-} h_s\big(\sigma_s dW_s+\eta_s dn_s\big)\right|\right)\le& C\,\mathbb{E}\left(\left(\int_0^T \alpha_{t^-}^2 h_t^2\big(\sigma_t^2 dt+\eta_t^2 dN_t\big)\right)^{\frac12}\right)\\
\le& C\,\mathbb{E}\left(\underset{0\le t\le T}\sup|\alpha_{t^-}| \left(\int_0^T h_t^2\big(\sigma_t^2 dt+\eta_t^2 dN_t\big)\right)^{\frac12}\right)\\
\le& C\left(\mathbb{E}\left(\underset{0\le t\le T}\sup |\alpha_{t^-}|\right)^2 \right)^{\frac12} \left(\mathbb{E}\left(\int_0^T h_t^2 \big(\sigma_t^2 dt+\eta_t^2 dN_t\big)\right)\right)^{\frac12}\\
\le & C\left(\mathbb{E}\left(\alpha_{T-}^2 \right)\right)^{\frac12}  < \infty,
\end{align*}
\noindent where the constant $C$ can differ from line to line. The first is Burkholder-Davis-Gundy inequality for $p=1$, the third is  Cauchy-Schwarz inequality, while the last is a consequence of Doob inequality and the boundedness  of $h$, $\eta$ and $\sigma$. Then the conclusion follows by the integrability assumptions on $\alpha$.

\noindent Exploiting the same chain of inequalities, for the second integral we can write

\begin{equation*}
\mathbb{E}\left(\underset{0\le t\le T}\sup\left|\int_0^t
X^{0,h}_{s^-}\right.\right. \!\big(\beta_s dW_s \left.\left.+\gamma_s dn_s\big)\right|\right) \leq C\,
 \left(\mathbb{E}\left(X^{0,h}_{T-}\right)^2 \right)^{\frac12} \left(\mathbb{E}\left(\int_0^T \big(\beta_s^2 ds+\gamma_s^2 dN_s\big)\right)\right)^{\frac12} <\infty,
\end{equation*}
where the finiteness of the last expression is due to  the square integrability of $X^{0,h}$ and $\alpha$.

\noindent We are left to prove that $\int_0^\cdot h_s\eta_s\gamma_sdn_s$ is a martingale. A sufficient condition is to show $\mathbb{E}\left(\int_0^T |h_t\eta_t\gamma_t| \nu\, dt\right)<\infty$  for the predictable process $h\eta\gamma$. This is true since $h$ is bounded and by Cauchy-Schwarz
\begin{align*}
\mathbb{E}\left(\int_0^T |\eta_t\gamma_t| \, dt\right)\leq C &	\left(\mathbb{E}\left(\int_0^T \eta_t^2dt \right)\right)^\frac12 \left(\mathbb{E}\left(\int_0^T\gamma_t^2dt\right)\right)^\frac12 < \infty.
\end{align*}

\noindent Taking the expectation in \eqref{U'alfa} and recalling \eqref{eg1} we find
\begin{align}\label{exp}
 \notag \mathbb{E}&\left( U^{\prime}(X^{\pi^*}_T+H)\, X^{0,h}_T\right)=  \mathbb{E}\left(\int_0^Th_t\left( U'(X_{t^-}^{\pi^*}+Y_{t^-}) \mu_t
 +\beta_t\sigma_t+\gamma_t\eta_t \nu\right)dt\right)\\  = \mathbb{E}&\left(\int_0^T h_t\left( U'(X_{t^-}^{\pi^*}+Y_{t^-}) \mu_t
 +U''(X_{t^-}^{\pi^*}+Y_{t^-})(Z_t+\pi^*_t
 \sigma_t)\sigma_t+\gamma_t\eta_t \nu\right)dt\right)=0,
\end{align}
\noindent
where in the last equality we replaced $\beta=U''(X_{-}^{\pi^*}+Y_{-})(Z+\pi^*\sigma)$.\\
\noindent Choosing in \eqref{exp} the integrand  \,\,$h=1\!\!1_{\{U'(X_{-}^{\pi^*}+Y_{-}) \mu
	+U''(X_{-}^{\pi^*}+Y_{-})(Z+\pi^*\sigma)\sigma+\gamma\eta \nu>0\}}$ and then \\ $h=1\!\!1_{\{U'(X_{-}^{\pi^*}+Y_{-}) \mu
	+U''(X_{-}^{\pi^*}+Y_{-})(Z+\pi^*\sigma)\sigma+\gamma\eta \nu<0\}}$,
 we deduce
\begin{equation}\label{eqopt}
	U'(X_{-}^{\pi^*}+Y_{-}) \mu
	+U''(X_{-}^{\pi^*}+Y_{-})(Z+\pi^*\sigma)\sigma+\gamma\eta \nu=0 \qquad d\mathbb{P}\otimes dt-\text{a.e.}\,\, \text{on}\,\,[0,T].
	\end{equation}
%

\noindent From \eqref{Zpsi},
$$\gamma=U'(\Psi+\pi^*\eta+X_{-}^{\pi^*}+Y_{-})-U'(X_{-}^{\pi^*}+Y_{-})$$
which plugged into \eqref{eqopt} gives the equation
\begin{equation}\label{eqopt2a}	U'(X_{-}^{\pi^*}+Y_{-}) \mu
+U''(X_{-}^{\pi^*}+Y_{-})(Z+\pi^*\sigma)\sigma+\left(U'(\Psi+\pi^*\eta+X_{-}^{\pi^*}+Y_{-})-U'(X_{-}^{\pi^*}+Y_{-})\right)\eta\nu=0.
	\end{equation}

\noindent Note that \eqref{eqopt2a}  includes the case $\eta=0$ for which the optimal strategy can be explicitly written,
$$
\pi^*=\frac1\sigma\left(-\frac{U'(X_{-}^{\pi^*}+Y_-)}{U''(X_{-}^{\pi^*}+Y_-)}\frac\mu\sigma-Z\right),
$$
as in \cite{HHIRZ}.

\noindent
For any $\omega \in \Omega$  and  $t\in [0,T]$, \eqref{eqopt2a} can be seen as $F=0$, where the function $F:D\times{\mathbb R}\longrightarrow {\mathbb R}$, defined by
\begin{equation} \label{FDini}
F (w,\pi)=U'(x+y)\mu +U''(x+y)(z\sigma+\pi\sigma^2)+
\left(U'(\psi+\pi\eta+x+y)-U'(x+y)\right)\eta\nu
\end{equation}	

	$\forall w\in D=\{w=(x,y,z, \psi, \eta, \mu, \sigma)\in {\mathbb R}^7 : \sigma^2>0 \}$, 
	and $\forall \pi \in {\mathbb R}$,
	is continuously differentiable on its domain, because of the smoothness hypotheses on $U$.
	Moreover $$\frac{\partial}{\partial \pi} F (w,\pi)=
	U''(x+y)\sigma^2+U''(\psi+\pi\eta+x+y)\eta^2\nu < \frac 1 2 U''(x+y)\sigma^2<0
	 \ \ \forall (w,\pi)\in D\times{\mathbb R}.$$
	Since $D$ is open, a global implicit function theorem can be applied (see Appendix).  Consequently, for any $w\in D$ the equation $F (w,\pi)=0$ admits a unique solution $\pi\in {\mathbb R}$ and there exists a function $G$ continuously differentiable on $D$ and such that
	$\{(w,\pi)\in D\times {\mathbb R} : F (w,\pi)=0\}=
	\{(w,\pi)\in D\times {\mathbb R} : \pi=G(w)\}.$\\
Thus, the optimal strategy is
\begin{equation*}
	\pi^*= G(X_{-},Y_{-},Z,\Psi,\eta,\sigma,\mu),
\end{equation*}
which is expressed in terms of the solution $(Y,X,Z,\Psi)$ of the forward backward system  (\ref{X1}-\ref{Y1}), where the backward equation \eqref{Y1} is obtained replacing \eqref{pi*} in \eqref{dYgamma}. Besides, we note that \eqref{X1} represents the optimal wealth process.
\end{proof}

\medskip
\noindent
For the next result we require $\sigma^{-1}$ to be bounded in order to get the admissibility of the strategy $\pi^*$.

\begin{theorem}\label{converse}
Suppose a solution of system (\ref{X1}-\ref{Y1}) exists with  $Z\in \mathbb H^2$  and $G$ a smooth function such that $F(w,G(w))=0$, where $F$ is defined by \eqref{FDini}.  Assume (H1) and (H2), (H3) with $\xi= X_T+H$. Moreover, let $U^{\prime}(X+Y)$ be a positive martingale such that
the stochastic integral $\frac 1{U^{\prime}(X_{-}+Y_{-})} \cdot U^{\prime}(X+Y)$ is a square integrable martingale.
Then the optimal strategy $\pi^*$ exists in the class $\Pi_x$ and is given by \eqref{pi*}.

\end{theorem}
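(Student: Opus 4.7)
The plan has three stages: define the candidate strategy, verify admissibility, and prove optimality.

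Setting $\pi^* := G(X_-, Y_-, Z, \Psi, \Upsilon)$ makes the forward equation of the system coincide with the self-financing wealth dynamics for $\pi^*$ started at $x$, so $X = X^{\pi^*}$ and the initial condition $X_0^{\pi^*}=x$ holds automatically. The square integrability of $\pi^*$ is where the hypothesis that $\sigma^{-1}$ is bounded enters. Isolating $\pi^* \sigma^2$ in the identity $F(\Upsilon; \pi^*) = 0$ and using the mean-value theorem to rewrite the nonlinear jump difference as $U''(\zeta)(\Psi + \pi^* \eta)$, one can bring the $\pi^*$ term to the left; since $U''$ is of constant sign, the resulting factor multiplying $\pi^*$ dominates $\sigma^2$ in absolute value, and (H1) bounds $|U'/U''|$ by $1/k$. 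Combining this with $Z \in \mathbb{H}^2$, $\Psi \in \mathbb{H}^2$ (inherited from the backward equation together with the square integrability provided by (H2)), and the boundedness of $\mu, \eta, \sigma, \sigma^{-1}$, one obtains $\pi^* \in \mathbb{H}^2$.

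For optimality, set $\alpha_t := U'(X_t + Y_t)$, which is a positive martingale by hypothesis with terminal value $\alpha_T = U'(X_T + H)$ since $Y_T = H$. Applying It\^o's formula to $U'(X_t + Y_t)$ through the forward-backward system produces the martingale representation $d\alpha_t = \beta_t dW_t + \gamma_t dn_t$ with
\begin{equation*}
\beta = U''(X_- + Y_-)(Z + \pi^* \sigma), \qquad \gamma = U'(\Psi + \pi^* \eta + X_- + Y_-) - U'(X_- + Y_-),
\end{equation*}
the vanishing of the drift being precisely a rearrangement of the identity $F(\Upsilon; \pi^*) = 0$. For any $\pi \in \Pi_x$, strict concavity of $U$ yields
\begin{equation*}
\mathbb{E}[U(X_T^\pi + H)] - \mathbb{E}[U(X_T + H)] \leq \mathbb{E}[\alpha_T(X_T^\pi - X_T)] = \mathbb{E}[\alpha_T X_T^{0, \pi - \pi^*}].
\end{equation*}
Integration by parts applied to $\alpha_t X_t^{0, \pi - \pi^*}$, mirroring the manipulation leading to \eqref{U'alfa} in the proof of Theorem \ref{Theo1}, shows that the finite-variation part reduces to $(\pi - \pi^*)[U'(X_- + Y_-)\mu + \beta \sigma + \gamma \eta \nu]\,dt$, which equals $(\pi - \pi^*)\, F(\Upsilon; \pi^*)\, dt \equiv 0$; taking expectation then gives $\mathbb{E}[\alpha_T X_T^{0, \pi - \pi^*}] = 0$ and optimality follows.

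The main technical obstacles are precisely those addressed by the extra hypotheses. First, extracting an $\mathbb{H}^2$ bound on $\pi^*$ from its implicit definition requires careful control of the nonlinear jump term $U'(\Psi + \pi^* \eta + X_- + Y_-)$. Second, justifying that the stochastic integrals appearing in the integration-by-parts computation are true martingales --- not merely local martingales --- is exactly the role of the assumption that $\frac{1}{U'(X_- + Y_-)} \cdot U'(X + Y)$ is a square-integrable martingale, which controls $\mathbb{E}\int_0^T(\beta_t^2 + \gamma_t^2 \nu)\,dt$; together with (H2)--(H3) and the boundedness of the coefficients, this suffices.
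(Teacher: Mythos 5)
Your overall architecture (define $\pi^*$ via $G$, check $\pi^*\in\Pi_x$, then use concavity plus a martingale argument for optimality) matches the paper's, and your optimality step is essentially the paper's change-of-measure/Girsanov argument rewritten as an integration by parts under $\mathbb{P}$; that part goes through, although the true-martingale justification there rests on (H2) making $\alpha=U'(X+Y)$ a square-integrable martingale (so that $\beta,\gamma\in\mathbb H^2$ and $\mathbb{E}\sup_t\alpha_t^2<\infty$ via Doob), not on the square integrability of $\frac1{U'(X_-+Y_-)}\cdot U'(X+Y)$ as you state.

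The genuine gap is in the admissibility step. Writing $F=0$ via the mean value theorem as
\begin{equation*}
\pi^*\bigl(U''(X_-+Y_-)\sigma^2+U''(\zeta)\eta^2\nu\bigr)=-U'(X_-+Y_-)\mu-U''(X_-+Y_-)Z\sigma-U''(\zeta)\Psi\eta\nu,
\end{equation*}
and dividing by the factor on the left (which indeed dominates $|U''(X_-+Y_-)|\sigma^2$), you are left with the term $\frac{U''(\zeta)}{U''(X_-+Y_-)}\,\frac{\Psi\eta\nu}{\sigma^2}$. The ratio $U''(\zeta)/U''(X_-+Y_-)$ is \emph{not} bounded under the standing hypotheses ((H1) only bounds ARA from below; for $U(x)=-e^{-\delta x}$ the ratio is $e^{-\delta(\zeta-(X_-+Y_-))}$, which is unbounded), and moreover $\Psi\in\mathbb H^2$ is not among the hypotheses --- the theorem assumes only $Z\in\mathbb H^2$, and your claim that $\Psi\in\mathbb H^2$ is ``inherited'' from the backward equation and (H2) is not justified. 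This is precisely where the assumption that $\frac1{U'(X_-+Y_-)}\cdot U'(X+Y)$ is a square-integrable martingale must enter: it gives $\frac{\gamma}{U'(X_-+Y_-)}\in\mathbb H^2$ for the jump integrand $\gamma=U'(\Psi+\pi^*\eta+X_-+Y_-)-U'(X_-+Y_-)$, and then solving \eqref{eqopt} directly as
\begin{equation*}
\pi^{*}=-\frac{1}{\sigma}\,\frac{U'(X_-+Y_-)}{U''(X_-+Y_-)}\left(\frac{\mu}{\sigma}+\frac{\eta\nu}{\sigma}\,\frac{\gamma}{U'(X_-+Y_-)}\right)-\frac{Z}{\sigma}
\end{equation*}
yields $\pi^*\in\mathbb H^2$ from (H1), $Z\in\mathbb H^2$, and the boundedness of $\mu,\eta,\sigma^{-1}$, without ever needing the uncontrollable ratio $U''(\zeta)/U''(X_-+Y_-)$ or an $\mathbb H^2$ bound on $\Psi$. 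You should replace your mean-value-theorem step with this argument.
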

\begin{proof}

We first check that the strategy defined in \eqref{pi*} and satisfying \eqref{eqopt} is in $\Pi_x$.
 We consider $(X^{\pi^*},Y,Z,\Psi)$ satisfying the system (\ref{X1}-\ref{Y1}) and we write Ito's formula for $U'(X^{\pi^*}+Y)$, i.e.
\begin{equation*}\label{U'}
dU'(X^{\pi^*}_t+Y_t)=U'(X^{\pi^*}_{t^-}+Y_{t^-})\left(\frac{U''(X^{\pi^*}_{t^-}+Y_{t^-})}{U'(X^{\pi^*}_{t^-}+Y_{t^-})}(\pi^*_t \sigma_t+Z_t)dW_t+\frac{{\gamma_t}}{U'(X^{\pi^*}_{t^-}+Y_{t^-})} dn_t\right)
\end{equation*}
which, using  \eqref{eqopt}, can be rewritten as
\begin{equation} \label{ItoUprime} dU'(X^{\pi^*}_t+Y_t)=U'(X^{\pi^*}_{t^-}+Y_{t^-})\left(-\Big(\frac{\mu_t}{\sigma_t}+\frac{\eta_t\nu}{\sigma_t}\frac{{\gamma_t}}
{U'(X^{\pi^*}_{t^-}+Y_{t^-})}\Big)dW_t+\frac{\gamma_t}{U'(X^{\pi^*}_{t^-}+Y_{t^-})} dn_t\right).
\end{equation}
Since we have assumed that $\frac 1{U'(X^{\pi^*}_{-}+Y_{-})} \cdot U'(X^{\pi^*}+Y)$  is a square integrable martingale, by the martingale representation theorem and the hypotheses on boundedness of $\eta, \mu$ and $\sigma^{-1}$ we deduce that $\frac{{\gamma}}{U'(X^{\pi^*}_{-}+Y_{-})} \in \mathbb{H}^2$.
Using this fact, together with the hypotheses (H1) on ARA and $Z\in \mathbb{H}^2$, since \eqref{eqopt} can be rewritten as
\begin{equation*}
\pi^{*}=- \frac{1}{\sigma} \frac{U'(X_{-}^{\pi^*}+Y_{-})}{U''(X_{-}^{\pi^*}+Y_{-})} \left( \frac{\mu}{\sigma}  + \frac{\eta \nu }{\sigma} \frac{{\gamma}}{U'(X^{\pi^*}_{-}+Y_{-})} \right)-\frac{Z}{\sigma},
\end{equation*}
we get that $\pi^{*}\in \Pi_x$.

\noindent
Now we are are left to show that $\pi^{*}$ is optimal. From \eqref{ItoUprime}, we observe that the positive martingale $U'(X^{\pi^*}+Y)$ can be written as the Doleans exponential  of the martingale
\begin{equation}
\label{hatL}
M=-\Big(\frac{\mu}{\sigma}+\frac{\eta\nu}{\sigma}\frac{{\gamma}}{U'(X^{\pi^*}_{-}+Y_{-})}\Big)\cdot W+\frac{{\gamma}}{U'(X^{\pi^*}_{-}+Y_{-})}\cdot n,\notag
\end{equation}
where $\frac{{\gamma}}{U'(X^{\pi^*}_{-}+Y_{-})}>-1$.

\noindent Since $U$ is a concave function, for any $\pi\in \Pi_x$ we have that
\begin{equation}
  \label{U}
  U(X_T^\pi+H)-U(X_T^{\pi^*}+H)\le U'(X_T^{\pi^*}+H)\left(X_T^\pi -X_T^{\pi^*}\right).
\end{equation}

\noindent
Thus we define the measure $\mathbb{Q}$ equivalent to $\mathbb{P}$ by
\begin{equation}\label{Q}\frac{d\mathbb{Q}}{d\mathbb{P}}=\frac{U'(X_T^{\pi^*}+H)}{\mathbb{E}\big(U'(X_T^{\pi^*}+H)\big)},\notag
\end{equation}
 and taking the expectation in \eqref{U} we write
\begin{align}
\notag \mathbb{E}\left( \frac{U(X_T^\pi+H)-U(X_T^{\pi^*}+H)}{\mathbb{E}\big(U'(X_T^{\pi^*}+H)\big)}\right)&\le \mathbb{ E^Q}\left(X_T^\pi -X_T^{\pi^*}\right)\\\label{optimal}&=\mathbb{ E^Q}\left(\int_0^T\big(\pi_t -\pi_t^*\big)\frac{dS_t}{S_{t^-}}\right)=0,
\end{align}
where the last equality is a consequence of the Girsanov Theorem. In fact, $\frac{1}{S_{-}}\cdot S$ is a $\mathbb{Q}$- martingale since its predictable quadratic covariation with $M$ is
\begin{align*}
\left\langle \frac{1}{S_{-}}\cdot S, M \right \rangle &=\left \langle \sigma \cdot W+\eta \cdot n, -\Big(\frac{\mu}{\sigma}+\frac{\eta\nu}{\sigma}\frac{{\gamma}}{U'(X^{\pi^*}_{-}+Y_{-})}\Big)\cdot W+\frac{{\gamma}}{U'(X^{\pi^*}_{-}+Y_{-})}\cdot n \right \rangle \\
&= - \int_0^{\cdot}\mu_s ds.
\end{align*}
Moreover, since $\pi$ and $\pi^*$ are in $\Pi_x$, we get that
$$\int_0^T\big(\pi_t -\pi_t^*\big)\frac{dS_t}{S_{t^-}}=\int_0^T(\pi_t-\pi_t^*)(\mu_tdt+\sigma_tdW_t+\eta_tdn_t)$$
 is a $\mathbb{Q}$-local martingale. Finally,  by the following chain of inequalities we see that it is  a true $\mathbb{Q}$-martingale:
\begin{align*}\mathbb{ E^Q}\left(\underset{0\le t\le T}\sup\Big|\int_0^{t}\big(\pi_s -\pi_s^*\big)\frac{dS_s}{S_{s^-}}\Big|\right)\le C \, &  \mathbb{ E^Q}\left(\Big(\int_0^{T}\big(\pi_t -\pi_t^*\big)^2(\sigma^2_t+\eta^2_t\nu) dt\Big)^\frac12\right)\\=
C \,& \mathbb{E}\left(\frac{U'(X_T^{\pi^*}+H)}{\mathbb{E}\big(U'(X_T^{\pi^*}+H)\big)}\Big(\int_0^{T}\big(\pi_t -\pi_t^*\big)^2dt\Big)^\frac12\right)\\\le C & \frac{\left(\mathbb{E}\big((U'(X_T^{\pi^*}+H))^2\big)\right)^\frac12}{\mathbb{E}\big(U'(X_T^{\pi^*}+H)\big)}\left(\mathbb{E}\left(\int_0^{T}\big(\pi_t -\pi_t^*\big)^2dt\right)\right)^\frac12< +\infty.
\end{align*}

\noindent From \eqref{optimal}, we have
$$\mathbb{E}\left( U(X_T^\pi+H)-U(X_T^{\pi^*}+H)\right)\le 0, \qquad \text{for any}\, \pi\in \Pi_x$$
which means that $\pi^*$ is optimal. The proof is complete.
\end{proof}

\begin{remark}\label{REM2}
It is worth
noting that assumption (H2) in Theorem \ref{converse} which requires \,\,$\mathbb{E}[(U^{\prime}(X^{\pi^*}_T+H))^2]<+\infty$  is only used to show that,  for any $\pi\in \Pi_x$, $ \frac{\pi}{S_{-}}\cdot S $ is a $\mathbb{Q}$-martingale. Therefore, it can be replaced by any other condition which ensures this requirement.

\end{remark}

\section{Pure jump model}
 In this section we study the case where the asset price is a pure jump process and show that for this model $\pi^*$ can be written in an explicit form in terms of the forward backward SDE system solution.\\
We consider \eqref{price} with $\sigma^2 \equiv 0$, that is the price $S$  is modeled  by a pure jump process whose dynamics  is
\begin{equation}
	dS_t=S_{t^-}(\mu_tdt+\eta_tdn_t),\quad S_0>0.\notag
\end{equation}
We recall that the filtration  ${\mathcal F}=({\mathcal F_t, t\in[0,T]})$ is generated by a standard Brownian motion $W$ and a simple Poisson process $N$ with intensity $\nu>0$.   Moreover, the coefficients $\mu$ and $\eta$ are predictable and bounded with $\eta>-1$ and with the additional assumption $c_1\le\frac{\mu}{\eta}\le c_2<\nu$.

\noindent

\begin{theorem}\label{Theo3pj} 	
	Let $\pi^* \in \Pi_x$ be optimal for the problem \eqref{problem} and  suppose (H1) holds.
	Under the assumptions of Proposition \ref{Prop1}, there exists a  solution $(X,Y,Z,\Psi)$ of the following forward-backward system
	\begin{align}
	\label{Xpj}X_t=&x+\int_0^t\left(
U'^{-1}\left(U'(X_{s^-}+Y_{s^-})\left(1-\frac{\mu_s}{\eta_s \nu}\right) \right)-\left(X_{s^-}+Y_{s^-}+\Psi_s\right)\right)\left(\frac{\mu_s}{\eta_s} ds+dn_s\right) \\[10pt]
	\label{Ypj} Y_t=&H-\int_t^T\left[	\left(U'^{-1}\left(U'(X_{s^-}+Y_{s^-})\left(1-\frac{\mu_s}{\eta_s \nu}\right) \right)-\left(X_{s^-}+Y_{s^-}\right)\right)\right]\left(1-\frac{\mu_s}{\eta_s \nu}\right)\nu ds\\\notag  -\int_t^T& \left[\left( \Psi_s+\frac{U'(X_{s^-}+Y_{s^-})}{U''(X_{s^-}+Y_{s^-})}\right) \frac{\mu_s}{\eta_s}-\frac12 \frac{U'''(X_{s^-}+Y_{s^-})}{U''(X_{s^-}+Y_{s^-})}Z_s^2\right]ds-\int_t^T (Z_sdW_s+ \Psi_s dn_s).
		\end{align}
	Moreover, $\pi^*$ takes on the form
	\begin{equation}\label{pi*pj}
		\pi^*=\frac1{\eta}\left(
	U'^{-1}\left(U'(X_{-}^{\pi^*}+Y_{-})\left(1-\frac{\mu}{\eta \nu}\right) \right)-(\Psi+X_{-}^{\pi^*}+Y_{-})\right),
	\end{equation}
	and the optimal wealth process is equal to $X$.\\
	Vice versa, suppose a solution of the system (\ref{Xpj}-\ref{Ypj}) exists and assume (H1) and (H2), (H3) with $\xi= X_T+H$. Moreover, let $U^{\prime}(X+Y)$ be a positive martingale.
If the strategy \eqref{pi*pj} is in the class $\Pi_x$ then it is optimal.
\end{theorem}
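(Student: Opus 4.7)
The plan is to adapt the proof of Theorem~\ref{Theo1} to the pure jump case $\sigma\equiv 0$ and then exploit the simplicity of the resulting optimality equation to invert it for $\pi^{*}$ in closed form, obtaining \eqref{pi*pj} directly rather than via the implicit function theorem.

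\textbf{Forward direction.} First I define $\alpha_t=\mathbb{E}[U'(X_T^{\pi^*}+H)\mid\mathcal{F}_t]$ and use the martingale representation with respect to $W$ and $n$ to write $\alpha=\alpha_0+\beta\cdot W+\gamma\cdot n$. Setting $Y_t=(U')^{-1}(\alpha_t)-X_t^{\pi^*}$, with auxiliary processes $Z=\beta/U''(X_{-}^{\pi^*}+Y_{-})$ and $\Psi=(U')^{-1}(\gamma+U'(X_{-}^{\pi^*}+Y_{-}))-(X_{-}^{\pi^*}+Y_{-})-\pi^{*}\eta$ as in \eqref{Zpsi}, and applying It\^o's formula exactly as in the derivation of \eqref{dYgamma} but with every $\sigma$-term suppressed, I get a BSDE for $Y$ expressed through $Z,\Psi,\pi^{*},\gamma$. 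Repeating the integration-by-parts and indicator-localization step of Theorem~\ref{Theo1} applied to \eqref{eg1} simplifies considerably in the absence of the Brownian component and yields the pointwise optimality identity
\begin{equation*}
U'(X_{-}^{\pi^*}+Y_{-})\,\mu + \gamma\eta\nu = 0, \qquad d\mathbb{P}\otimes dt\text{-a.e.}
\end{equation*}

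\textbf{Closed form for $\pi^{*}$ and the system.} Substituting $\gamma=U'(\Psi+\pi^{*}\eta+X_{-}^{\pi^*}+Y_{-})-U'(X_{-}^{\pi^*}+Y_{-})$ from \eqref{Zpsi} into the previous identity and rearranging gives
\begin{equation*}
U'\bigl(\Psi+\pi^{*}\eta+X_{-}^{\pi^*}+Y_{-}\bigr)=U'(X_{-}^{\pi^*}+Y_{-})\Bigl(1-\tfrac{\mu}{\eta\nu}\Bigr).
\end{equation*}
The standing bound $\mu/\eta\le c_2<\nu$ keeps the right-hand side strictly positive, and (H1) forces $U'$ to be a bijection $\mathbb{R}\to(0,\infty)$, since the ARA lower bound implies $U'(x)\to\infty$ as $x\to-\infty$ and $U'(x)\to 0$ as $x\to+\infty$. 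Applying $(U')^{-1}$ and solving for $\pi^{*}$ produces \eqref{pi*pj}. Plugging \eqref{pi*pj} into $dX^{\pi^*}=\pi^{*}\eta\bigl(\tfrac{\mu}{\eta}dt+dn\bigr)$ yields \eqref{Xpj}, and inserting the optimality identity together with \eqref{pi*pj} into the intermediate BSDE for $Y$ rearranges into \eqref{Ypj}.

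\textbf{Converse and main obstacle.} Given a solution $(X,Y,Z,\Psi)$ of \eqref{Xpj}-\eqref{Ypj}, define $\pi^{*}$ by \eqref{pi*pj}; by construction it satisfies the optimality identity above. I then follow the proof of Theorem~\ref{converse} mutatis mutandis: It\^o's formula for $U'(X+Y)$ combined with the optimality identity shows that the positive martingale $U'(X+Y)$ is a Dol\'eans exponential, and the measure $d\mathbb{Q}/d\mathbb{P}=U'(X_T+H)/\mathbb{E}[U'(X_T+H)]$ makes $\frac{1}{S_{-}}\cdot S$ a $\mathbb{Q}$-local martingale via the predictable covariation computation of Theorem~\ref{converse} (only the jump part contributes now). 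Concavity of $U$ and $\mathbb{E}^{\mathbb{Q}}\bigl[\int_0^T(\pi-\pi^{*})\,dS/S_{-}\bigr]=0$ (where the latter is a true $\mathbb{Q}$-martingale thanks to the BDG plus Cauchy-Schwarz estimate displayed at the end of Theorem~\ref{converse}, valid since $\pi,\pi^{*}\in\Pi_x$ and (H2) hold) together give $\mathbb{E}[U(X_T^\pi+H)]\le\mathbb{E}[U(X_T^{\pi^*}+H)]$ for every $\pi\in\Pi_x$. I expect the main subtlety to be the careful justification that $(U')^{-1}$ is defined at the required target value in a stable, predictable way: this rests jointly on (H1), which guarantees bijectivity of $U'$, and on the strict bound $\mu/\eta\le c_2<\nu$, which keeps $1-\mu/(\eta\nu)$ bounded away from $0$ uniformly in $(\omega,t)$.
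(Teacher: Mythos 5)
Your proposal follows essentially the same route as the paper: the paper's proof literally says ``using the same arguments as those for the jump-diffusion case'' to obtain the BSDE and the optimality identity $U'(X_{-}^{\pi^*}+Y_{-})\mu+\gamma\eta\nu=0$, then inverts it explicitly for $\pi^*$, and handles the converse by repeating the Dol\'eans-exponential/Girsanov argument of Theorem \ref{converse}. Your added justification that $(U')^{-1}$ is applicable --- via (H1) forcing $U'$ to be a decreasing bijection onto $(0,\infty)$ and the bound $\mu/\eta\le c_2<\nu$ keeping $1-\mu/(\eta\nu)$ bounded away from $0$ --- is a correct filling-in of a step the paper leaves implicit, not a different approach.
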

\begin{proof}
Using the same arguments as those for the jump-diffusion case, we can write
\begin{align}\label{dYpj}
\notag	dY_t=&-\Big[\frac12 \frac{U'''(X_{t^-}^{\pi^*}+Y_{t^-})}{U''(X_{t^-}^{\pi^*}+Y_{t^-})}Z_t^2-\left( \Psi_t+\pi^*_t\eta_t\right)\nu+\frac{1}{U''(X_{t^-}^{\pi^*}+Y_{t^-})}\gamma_t \nu +	\pi^*_t\mu_t\Big]dt\\ & +Z_tdW_t+\Psi_tdn_t, \quad Y_T=H.
\end{align}
Condition \eqref{eqopt} now becomes
\begin{equation}\label{eqoptpj}
	U'(X_{-}^{\pi^*}+Y_{-}) \mu +\gamma\eta \nu=0 \qquad d\mathbb{P}\otimes dt-\text{a.e.}\,\, \text{on}\,\,[0,T],
	\end{equation}
where
\begin{equation}\label{gammadef}
\gamma=U'(\Psi+\pi^*\eta+X_{-}^{\pi^*}+Y_{-})-U'(X_{-}^{\pi^*}+Y_{-}).
\end{equation}
From \eqref{eqoptpj} and \eqref{gammadef}, we deduce  \eqref{pi*pj}. Finally, plugging $\gamma$ obtained from $ \eqref{eqoptpj}$ and \eqref{pi*pj} into \eqref{dYpj}, we deduce \eqref{Ypj}.
\\
For the converse, let us observe that the positive martingale $U'(X^{\pi^*}+Y)$ can be written as the Doleans exponential  of the martingale
\begin{equation}
\label{hatLpj}
M=\frac{U^{''}(X^{\pi^*}_{-}+Y_{-})}{U'(X^{\pi^*}_{-}+Y_{-})} Z \cdot W-\frac{{\mu}}{\eta \nu}\cdot n,
\end{equation}
whose predictable quadratic covariation with $\frac{1}{S_{-}}\cdot S$ is
$
\langle \frac{1}{S_{-}}\cdot S, M  \rangle  = - \int_0^{\cdot}\mu_s ds.
$
Thus, by Girsanov Theorem, $\frac{1}{S_{-}}\cdot S$ is a martingale with respect to the $\mathbb{P}$-equivalent measure $\mathbb{Q}$ defined by
$$\frac{d\mathbb{Q}}{d\mathbb{P}}=\frac{U'(X_T^{\pi^*}+H)}{\mathbb{E}\big(U'(X_T^{\pi^*}+H)\big)}.$$
The conclusion for the optimality of $\pi^*$  follows as in Theorem \ref{converse}.
\end{proof}

\noindent
As previously said, it is not easy to find a solution of the coupled system. In the next proposition, we consider the pure investment  problem, i.e. $H=0$, and find a sufficient condition for obtaining an explicit solution to the system (\ref{Xpj} - \ref{Ypj}) with $Z=0$. At the end of the next section, we show an example where this condition is satisfied.
	

\begin{proposition} \label{Proppj} Let $H=0$ and suppose (H1) holds.
Consider the system
	\begin{align}
		\label{XpjH0}X_t=&x+\int_0^t  \frac{1}{ARA(X_{s^-})} \frac{a_s-{\mu_s}/{\eta_s}} {\mu_s-\eta_s \nu} \left(\mu_s ds+ \eta_s dn_s \right), \\[10pt]
		\label{YpjH0} Y_t=& U'^{-1} \left(U'(X_{t})e^{A_t}\right) -X_t,
	\end{align}
where $a\in \mathbb{H}^2$  and $A_t=-\int _t^T a_s \, ds$, with $A_0$ deterministic.  If
 $U'(X)e^{A}$ is a positive martingale,
then $(X,Y)$ gives a  solution to the forward backward system (\ref{Xpj} - \ref{Ypj}) with $Z=0$.
Moreover, under (H3) with $\xi= X_T$ the strategy
	\begin{equation}
	\pi^*=\frac{1}{ARA(X_{-})} \frac{a-{\mu}/{\eta}} {\mu-\eta \nu}
	\end{equation}
	is in $\Pi_x$ and it is optimal. \\
	\end{proposition}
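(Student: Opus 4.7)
The plan is to split the argument in two: first verify that the quadruple $(X,Y,Z=0,\Psi)$ (with a suitably chosen $\Psi$) solves the forward-backward system (\ref{Xpj}-\ref{Ypj}), then invoke the converse half of Theorem \ref{Theo3pj} to conclude optimality of $\pi^*$.

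For the first part, the definition of $Y$ rearranges to the pivotal identity
\[
U'(X_t+Y_t)=U'(X_t)e^{A_t}=:M_t,
\]
so the hypothesis is that $M$ is a positive martingale. I would apply It\^o's formula to $U'(X_t)e^{A_t}$, using $dA_t=a_t\,dt$ together with the dynamics $dX_t=\pi^*_t(\mu_t-\eta_t\nu)\,dt+\pi^*_t\eta_t\,dN_t$ (no continuous martingale part), and read off the martingale representation
\[
\beta_t\equiv 0, \qquad \gamma_t=e^{A_t}\bigl(U'(X_{t^-}+\pi^*_t\eta_t)-U'(X_{t^-})\bigr),
\]
together with the drift-vanishing relation
\[
U'(X_{t^-})a_t+U''(X_{t^-})\pi^*_t(\mu_t-\eta_t\nu)+\nu\bigl(U'(X_{t^-}+\pi^*_t\eta_t)-U'(X_{t^-})\bigr)=0.
\]
Substituting the explicit formula for $\pi^*$ into this relation and simplifying collapses the jump increment to $U'(X_{t^-}+\pi^*_t\eta_t)-U'(X_{t^-})=-U'(X_{t^-})\mu_t/(\eta_t\nu)$, which multiplied by $e^{A_t}\eta_t\nu$ is exactly the jump-optimality equation \eqref{eqoptpj}. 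I would next define $\Psi_t$ implicitly by $U'(\Psi_t+\pi^*_t\eta_t+X_{t^-}+Y_{t^-})=e^{A_t}U'(X_{t^-}+\pi^*_t\eta_t)$; combined with $U'(X_{t^-}+Y_{t^-})=e^{A_t}U'(X_{t^-})$ this makes $\gamma_t=U'(\Psi_t+\pi^*_t\eta_t+X_{t^-}+Y_{t^-})-U'(X_{t^-}+Y_{t^-})$, whence \eqref{pi*pj} follows from the optimality equation and the SDE for $X$ in the proposition coincides with (\ref{Xpj}). The backward equation (\ref{Ypj}) is then recovered by applying It\^o to $Y=U'^{-1}(M)-X$ and collecting terms exactly as in the derivation of \eqref{dYpj}, with $Z=0$ arising directly from $\beta=0$.

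For the second part, I would check $\pi^*\in\Pi_x$ from the data: (H1) gives $1/ARA(X_{-})\le 1/k$; the standing bound $\mu/\eta\le c_2<\nu$ together with the boundedness of $\eta$ keeps $(\mu-\eta\nu)^{-1}$ bounded; and $a\in\mathbb H^2$ with boundedness of $\mu/\eta$ gives $a-\mu/\eta\in\mathbb H^2$; hence $\pi^*\in\mathbb H^2$ and $X_0^{\pi^*}=x$. Since $U'(X+Y)=M$ is a positive martingale by hypothesis and (H1) and (H3) with $\xi=X_T$ hold, the converse part of Theorem \ref{Theo3pj} then delivers optimality of $\pi^*$.

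The principal obstacle is the algebraic reconciliation in the first part: one must verify that the explicit formula for $\pi^*$ stated in the proposition is precisely the choice making the drift of $M=U'(X)e^A$ vanish, and that this vanishing is in turn equivalent to the jump-optimality equation \eqref{eqoptpj}, so that both forward and backward equations of (\ref{Xpj}-\ref{Ypj}) collapse to the identities the proposition claims.
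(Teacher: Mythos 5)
Your verification of the forward--backward system follows essentially the paper's own route: It\^o's formula applied to $M=U'(X)e^{A}$, the martingale hypothesis forcing the drift to vanish and hence collapsing the jump increment to $U'(X_{t^-}+\pi^*_t\eta_t)-U'(X_{t^-})=-U'(X_{t^-})\mu_t/(\eta_t\nu)$, and a choice of $\Psi$ that coincides with the paper's explicit formula $\Psi=\frac{1}{ARA(X_-)}\,\frac{\mu-\eta a}{\mu-\eta\nu}+U'^{-1}\bigl(U'(X_-)e^{A}(1-\frac{\mu}{\eta\nu})\bigr)-U'^{-1}\bigl(U'(X_-)e^{A}\bigr)$ (note $\frac{1}{ARA(X_-)}\frac{\mu-\eta a}{\mu-\eta\nu}=-\pi^*\eta$). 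The admissibility discussion is at the same level of detail as the paper's. That part is fine.

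The gap is in the last step. The converse part of Theorem \ref{Theo3pj} requires, besides (H1), (H3) and the positive-martingale property of $U'(X+Y)$, also hypothesis (H2) with $\xi=X_T+H=X_T$, i.e.\ $\mathbb{E}\bigl[(U'(X_T))^2\bigr]<+\infty$; as Remark \ref{REM2} explains, this is precisely what guarantees that $\frac{\pi}{S_-}\cdot S$ is a $\mathbb{Q}$-martingale for every competitor $\pi\in\Pi_x$, so it cannot be dropped. Proposition \ref{Proppj} does not assume (H2), so you must prove it, and your proposal simply omits it. The paper closes this by observing that, by \eqref{Uprimeexp2}, $U'(X)e^{A}$ equals the deterministic constant $U'(x)e^{A_0}$ (this is where the assumption that $A_0$ is deterministic enters) times the Dol\'eans exponential of $M=-\frac{\mu}{\eta\nu}\cdot n$, whose predictable quadratic variation $\int_0^{\cdot}\frac{\mu_s^2}{\eta_s^2\nu}\,ds$ is bounded; Novikov's condition then makes $\mathcal{E}(2M)$ a uniformly integrable martingale, so $[\mathcal{E}_T(M)]^2=\mathcal{E}_T(2M)e^{\langle M\rangle_T}$ is integrable and, since $A_T=0$, $\mathbb{E}\bigl[(U'(X_T))^2\bigr]<+\infty$. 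Without this argument (or a substitute in the spirit of Remark \ref{REM2}), the appeal to the converse of Theorem \ref{Theo3pj} is incomplete.
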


\begin{proof}
	
	Applying  Ito's formula to $U'(X_{t})e^{A_t}$, and taking into account the definition of $A_t$, we get
	\begin{align}\label{Uprimeexp1}
		d\left(U'(X_{t})e^{A_t}\right)&= U'(X_{t^{-}})e^{A_t} a_t dt + U^{''}(X_{t^{-}})e^{A_t}\frac{1}{ARA(X_{t^-})} \left( a_t-\frac{\mu_t}{\eta_t} \right)dt \notag \\
		& \quad + U'(X_{t})e^{A_t} - U'(X_{t^{-}})e^{A_t}\notag\\
		&= U'(X_{t^{-}})e^{A_t}\left(\left(\frac{\mu_t}{\eta_t}+\frac{\nu\gamma_t}{e^{A_t} U'(X_{t^{-}})}\right)dt+\frac{\gamma_t}{e^{A_t} U'(X_{t^{-}})}dn_t\right)
	\end{align}
where $\gamma_t \Delta N_t=e^{A_t} \left(U'(X_{t})-U'(X_{t^{-}})\right)$.\\
Since we assumed that $U'(X+Y)=U'(X)e^A$ is a  martingale, we find $\gamma = - \frac{\mu}{\eta \nu} e^{A} U'(X_{{-}})$  and therefore
\begin{equation}\label{Uprimeexp2}
d\left(U'(X_{t})e^{A_t}\right)=-U'(X_{t^{-}})e^{A_t}\frac{\mu_t}{\eta_t \nu}dn_t.
\end{equation}

\noindent
From \eqref{YpjH0} we get $Y_T=0$ and, by Ito's formula, using \eqref{Uprimeexp1} we see that
	\begin{align}\label{YpjH01}
		dY_t=& \frac{U'(X_{t^-})e^{A_t}}{ U^{''}\left(U'^{-1} \left(U'(X_{t^-})e^{A_t}\right)\right)}  \frac{\mu_t }{\eta_t }  dt +
\left[ U'^{-1} \left(U'(X_t)e^{A_t}\right)-U'^{-1} \left(U'(X_{t^-})e^{A_t}\right)\right] \notag
\\ &   -\frac{1}{ARA(X_{t^-})} \frac{a_t-{\mu_t}/{\eta_t}} {\mu_t-\eta_t \nu} \left(\mu_t dt+ \eta_t dn_t \right)\, \notag \\
=& \frac{U'(X_{t^-})e^{A_t}}{ U^{''}\left(U'^{-1} \left(U'(X_{t^-})e^{A_t}\right)\right)}  \frac{\mu_t }{\eta_t }  dt +
\left[ U'^{-1} \left(U'(X_{t^-})e^{A_t}\left(1-\frac{\mu_t}{\eta_t \nu}\Delta N_t\right)\right)-U'^{-1} \left(U'(X_{t^-})e^{A_t}\right)\right] \notag
\\ &   +\frac{1}{ARA(X_{t^-})} \, \frac{\mu_t -\eta_t a_t}{\mu_t-\eta_t\nu}\left(\frac{\mu_t}{\eta_t}\, dt + dn_t\right)\, \notag \\
=& \frac{U'(X_{t^-})e^{A_t}}{ U^{''}\left(U'^{-1} \left(U'(X_{t^-})e^{A_t}\right)\right)}  \frac{\mu_t }{\eta_t }  dt +
\left[ U'^{-1} \left(U'(X_{t^-})e^{A_t}\left(1-\frac{\mu_t}{\eta_t \nu}\right)\right)-U'^{-1} \left(U'(X_{t^-})e^{A_t}\right)\right] \Delta N_t\notag
\\ &   +\frac{1}{ARA(X_{t^-})} \, \frac{\mu_t -\eta_t a_t}{\mu_t-\eta_t\nu}\left(\frac{\mu_t}{\eta_t}\, dt + dn_t\right)\, \notag \\
=& \left( \frac{U'(X_{t^-})e^{A_t}}{ U^{''}\left(U'^{-1} \left(U'(X_{t^-})e^{A_t}\right)\right)} +\frac{1}{ARA(X_{t^-})} \, \frac{\mu_t -\eta_t a_t}{\mu_t-\eta_t\nu}  \right) \frac{\mu_t }{\eta_t }  dt \notag  \\
 &  + \left[ U'^{-1} \left(U'(X_{t^-})e^{A_t}\left(1-\frac{\mu_t}{\eta_t \nu}\right)\right)-U'^{-1} \left(U'(X_{t^-})e^{A_t}\right)\right] \left(1-\frac{\mu_t}{\eta_t \nu}\right) \nu dt\notag \\
 &  + \left[ U'^{-1} \left(U'(X_{t^-})e^{A_t}\left(1-\frac{\mu_t}{\eta_t \nu}\right)\right)-U'^{-1} \left(U'(X_{t^-})e^{A_t}\right)\right] \frac{\mu_t}{\eta_t} dt\notag \\
&   + \left[ \frac{1}{ARA(X_{t^-})} \, \frac{\mu_t -\eta_t a_t}{\mu_t-\eta_t\nu}+U'^{-1} \left(U'(X_{t^-})e^{A_t}\left(1-\frac{\mu_t}{\eta_t \nu}\right)\right)-U'^{-1} \left(U'(X_{t^-})e^{A_t}\right)\right] dn_t,\notag
	\end{align}
which corresponds to
\begin{align*}
	dY_t=&\left[	\left(U'^{-1}\left(U'(X_{t^-}+Y_{t^-})\left(1-\frac{\mu_t}{\eta_t \nu}\right) \right)-\left(X_{t^-}+Y_{t^-}\right)\right)\right]\left(1-\frac{\mu_t}{\eta_t \nu}\right)\nu dt\\\notag  +
 & \left( \Psi_t+\frac{U'(X_{t^-}+Y_{t^-})}{U''(X_{t^-}+Y_{t^-})}\right) \frac{\mu_t}{\eta_t} dt +  \Psi_t dn_t
		\end{align*}
with
$$\Psi= \frac{1}{ARA(X_{-})} \, \frac{\mu -\eta a}{\mu-\eta \nu}+U'^{-1} \left(U'(X_{-})e^{A}\left(1-\frac{\mu}{\eta \nu}\right)\right)-U'^{-1} \left(U'(X_{-})e^{A}\right).$$
Therefore, $Y$ satisfies   \eqref{Ypj} with $Z=0$.

\noindent
The admissibility of $\pi^*$ follows from (H1), the assumptions on $a_t$ and on the model coefficients. In order to check the optimality of the strategy, by the converse part of Theorem \ref{Theo3pj}, we are left to prove that  $U'(X)e^{A}$ is a positive martingale and $\mathbb{E}\left[\left(U'(X_T)\right)^2\right]<+\infty$.  But this is true since, by \eqref{Uprimeexp2},  $U'(X)e^{A}$ is the Doleans exponential  of the martingale
	$M=-\frac{{\mu}}{\eta \nu}\cdot n$
	whose predictable quadratic variation   $\langle M\rangle= \int_0^{\cdot} \frac{{\mu_s}^2}{\eta_s^2 \nu}\, ds $
	is  bounded. In fact, using Novikov condition,  $\mathcal E (2M)$ is a uniformly integrable martingale and, thus,
 $\left[\mathcal E _T(M)\right]^2=\mathcal E_T (2M) e^{\langle M\rangle_T }$ has finite 
 expectation.

\end{proof}

\section{Exponential utility}

\noindent
  In this last section we tailor the results obtained previously to the case of the exponential utility, which has been extensively studied in the literature (e.g., \cite{DGRSSS}, \cite{F}, \cite{MZ}, \cite{SST}). It can be easily seen that with this particular choice of utility function, the forward-backward system decouples and the problem reduces to the study of a backward stochastic differential equation. \\
For results for continuous price models, see, e.g., \cite{REK}, \cite{HIM} in Brownian setting, and  \cite{MSc}, \cite{Mo1} and \cite{MS} for more general continuous semimartingale models. For models allowing jumps, we quote among others \cite{Mo2}, \cite{Mo3}, \cite{AM}, and \cite{MeSt} for the pure jump model.\\
We consider the exponential utility
$U(x)=-e^{-\delta x}$
\noindent with risk aversion parameter $\delta \in (0,+\infty)$ and a bounded random liability $H$. In this case the evolution for
$$Y=-\frac{1}{\delta}\log\frac{\alpha}{\delta}-X^{\pi^*}$$
does not depend on the wealth process $X^{\pi^*}$ and is given by the following backward equation
$$
dY_t=\left((e^{-\delta (\Psi_t+\pi_t^*\eta_t)}-1)\frac{\nu}{\delta}+\frac{1}{2}\delta (Z_t+\pi_t^* \sigma_t)^2+(\Psi_t+\pi_t^*\eta_t)\nu
-\pi_t^*\mu_t\right)dt +Z_tdW_t+\Psi_tdn_t,
$$
with final condition $ Y_T=H$.

\noindent
Moreover, equation \eqref{eqopt2a} in the proof of Theorem \ref{Theo1} for the optimal strategy $\pi^*$ can be rewritten as
$$
\mu_t
-\delta(Z_t+\pi^*_t\sigma_t)\sigma_t+\left(e^{-\delta(\Psi_t+\pi^*_t\eta_t)}- 1 \right) \eta_t \nu=0
\qquad d\mathbb{P}\otimes dt-\text{a.e.}\,\, \text{on}\,\,[0,T],$$
from which we can deduce  that also the optimal strategy is independent on the wealth.\\
\noindent
Let us point out that, employing a notation similar to \cite{Mo1},  the driver of the BSDE can be rewritten in the form
$$
f_t(z,\psi)=\sup_{\pi\in \mathbb{R}}\left\{  -\frac \delta 2 \left(\pi \sigma_t - (-z +\frac{\mu_t}{\delta \sigma_t})\right)^2 - [-\psi -\pi \eta_t]_{\delta}\right\} -\frac{\mu_t}{\sigma_t} z + \frac{\mu_t^2}{2 \delta \sigma_t^2},
$$
where, for $\psi\in \mathbb{R}$,
$
[\psi]_\delta=\frac{\nu}{\delta}\left(e^{\delta \psi}-1 -\delta \psi\right).
$
Then, using the results in \cite{Mo1}, we deduce that there exists a solution to the BSDE with $Y$ bounded and $Z, \Psi\in \mathbb{H}^2$.

\medskip
\noindent
In the pure jump case, the backward evolution of $Y$ reduces to
\begin{equation*}
	dY_t=\left(\frac 12 \delta Z_t^2 -\frac \nu \delta \left(1-\frac{\mu_t}{\eta_t\nu}\right)\ln \left(1-\frac{\mu_t}{\eta_t\nu}\right) + \frac{\mu_t}{\eta_t}\left( \Psi_t  -\frac 1 \delta \right) \right) dt +Z_tdW_t+\Psi_tdn_t
\end{equation*}
and the optimal strategy $\pi^*$ can be written in an explicit form as stated in the next proposition.

\begin{proposition}\label{prop_exp}
A solution of the backward equation 
		
\begin{equation}\label{Yexppj}
	Y_t=H-\int_t^T\left(\frac 12 \delta Z_s^2 -\frac \nu \delta \left(1-\frac{\mu_s}{\eta_s\nu}\right)\ln \left(1-\frac{\mu_s}{\eta_s\nu}\right) + \frac{\mu_s}{\eta_s}\left( \Psi_s  -\frac 1 \delta \right) \right) ds -\int_t^T(Z_sdW_s+\Psi_sdn_s)
\end{equation}
exists with $Y$ bounded and $Z, \Psi\in \mathbb{H}^2$	
and the strategy $\pi^*$
\begin{equation}\label{pi*exppj}
	\pi^*=-\frac 1 \eta \left(\frac 1 \delta \ln \left(1-\frac{\mu_t}{\eta_t\nu}\right) + \Psi\right)
\end{equation}
is in $\Pi_x$.
In addition, if  $e^{-\delta (X^{\pi^*}+Y)}$ is a positive martingale and either $Z$ or $\Psi$ is bounded, then \eqref{pi*exppj} is optimal.
\end{proposition}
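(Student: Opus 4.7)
The plan has three parts: existence of a solution to \eqref{Yexppj}, admissibility of the candidate $\pi^{*}$, and optimality via the converse of Theorem~\ref{Theo3pj}. For existence, I would first read off the driver of \eqref{Yexppj},
\[
f_s(z,\psi)=\frac{1}{2}\delta z^{2}+\frac{\mu_s}{\eta_s}\psi-\frac{\nu}{\delta}\left(1-\frac{\mu_s}{\eta_s\nu}\right)\ln\left(1-\frac{\mu_s}{\eta_s\nu}\right)-\frac{\mu_s}{\delta\eta_s},
\]
and observe that under the standing assumption $c_1\le\mu/\eta\le c_2<\nu$, both $\mu/\eta$ and $1-\mu/(\eta\nu)$ take values in compact subsets of $\mathbb{R}$ and of $(0,\infty)$, so the coefficient of $\psi$ and the $(z,\psi)$-free part of $f$ are bounded; only the $z$-term has quadratic growth. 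Together with bounded $H$, this places \eqref{Yexppj} in the setting of quadratic BSDEs with jumps treated by Morlais in \cite{Mo1,Mo2,Mo3}, which furnishes a solution $(Y,Z,\Psi)$ with $Y$ bounded and $Z,\Psi\in\mathbb{H}^{2}$.

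Admissibility will follow directly from \eqref{pi*exppj}: writing $\pi^{*}=-(\delta\eta)^{-1}\ln(1-\mu/(\eta\nu))-\Psi/\eta$, the first summand is bounded by the coefficient assumptions and the second lies in $\mathbb{H}^{2}$ since $\Psi\in\mathbb{H}^{2}$. For optimality I would then invoke the converse half of Theorem~\ref{Theo3pj}: the martingale property of $U'(X+Y)=\delta e^{-\delta(X+Y)}$ is assumed, (H1) is trivial because $\text{ARA}\equiv\delta$, and (H3) follows directly from the martingale identity $\mathbb{E}[e^{-\delta(X_T^{\pi^{*}}+H)}]=e^{-\delta(x+Y_0)}<\infty$ combined with boundedness of $Y_0$ and $H$. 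The main obstacle is (H2), namely $\mathbb{E}[e^{-2\delta(X_T^{\pi^{*}}+H)}]<\infty$.

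For (H2) I would split according to which of $Z$, $\Psi$ is assumed bounded. In the $Z$-bounded case, rewrite $e^{-\delta(X^{\pi^{*}}+Y)}=e^{-\delta(x+Y_0)}\mathcal{E}(M)$ with $M=-\delta Z\cdot W-(\mu/(\eta\nu))\cdot n$, factor $\mathcal{E}(M)^{2}=\mathcal{E}(M^c)^{2}\mathcal{E}(M^d)^{2}$ using orthogonality, and bound the two factors separately: Novikov applied to $\mathcal{E}(2pM^c)$ (with $\langle M^c\rangle$ bounded because $Z$ is) gives $\mathcal{E}(M^c)_T\in L^{2p}$ for every $p$, while the pure-jump factor is pointwise dominated by $e^{CT}K^{2N_T}$ for constants $C,K$ coming from the bounded coefficients, and has all moments finite by the Poisson exponential-moment formula; Hölder's inequality then yields $\mathbb{E}[\mathcal{E}(M)_T^{2}]<\infty$ and hence (H2). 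In the $\Psi$-bounded case, I would bypass Girsanov altogether: since $\pi^{*}\eta=-(1/\delta)\ln(1-\mu/(\eta\nu))-\Psi$ and $\pi^{*}(\mu-\eta\nu)$ are then both bounded processes, the pure-jump wealth satisfies $|X_T^{\pi^{*}}|\le C(1+N_T)$ pathwise, and (H2) follows immediately from Poisson exponential moments together with boundedness of $H$. With (H2) in hand, Theorem~\ref{Theo3pj} closes the optimality argument.
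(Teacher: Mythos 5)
Your overall architecture matches the paper's: existence of $(Y,Z,\Psi)$ by appeal to a quadratic-BSDE-with-jumps result, admissibility read off from \eqref{pi*exppj}, and optimality via the converse part of Theorem~\ref{Theo3pj} with (H2) checked separately according to which of $Z,\Psi$ is bounded (your $\Psi$-bounded argument, bounding $\pi^*\eta$ and $\pi^*(\mu-\eta\nu)$ and using Poisson exponential moments, is exactly the paper's). The two places where you diverge are worth noting. First, for existence the paper does not cite Morlais here but Theorem~1 of Antonelli--Mancini \cite{AM}, and the bulk of its proof consists of verifying that theorem's hypotheses: a two-sided bound A1) on the driver by $\lambda_t\pm\frac{\delta}{2}z^2\pm[\psi]_\delta$ (obtained by maximizing an explicit function of $\psi$), and a one-sided Lipschitz condition A2) in $\psi$ with slope $\zeta=-\mu/(\eta\nu)$ kept strictly above $-1$ --- which is precisely where the standing assumption $c_2<\nu$ is used. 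Your proposal records the right structural facts (bounded coefficient of $\psi$, quadratic growth only in $z$, bounded $H$) but omits this verification entirely; in particular the condition $\mu/(\eta\nu)<1$, without which no comparison/Girsanov argument for the jump part goes through, never appears. This is the one genuinely thin spot: the citation is plausible but the decisive hypothesis check is missing. Second, in the $Z$-bounded case of (H2) you factor $\mathcal{E}(M)=\mathcal{E}(M^c)\mathcal{E}(M^d)$, apply Novikov only to the continuous part, dominate the purely discontinuous part pointwise by $e^{CT}K^{2N_T}$, and conclude by H\"older; the paper instead writes $[\mathcal{E}_T(M)]^2=\mathcal{E}_T(2M)e^{\langle M\rangle_T}$ and applies Novikov to $\mathcal{E}(2M)$. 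Your decomposition is arguably the more robust one for a martingale with jumps, since the squaring identity used in the paper is really the continuous-martingale formula, while the orthogonal factorization you use is exact and each factor is controlled by elementary means.
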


\begin{proof}
	
The existence of a solution to \eqref{Yexppj} with $Y$ bounded and  $Z, \Psi \in \mathbb{H}^2$ is proved in Theorem 1 of Antonelli and Mancini \cite{AM}. This theorem requires two main assumptions on the driver $f$ of \eqref{Yexppj}, which we can rewrite as
\begin{description}
	\item[\sc{A1)}] $f$ is measurable and predictable and satisfies $d\mathbb{P} \otimes dt$ - a.e.
	\begin{equation}\label{A1}
		-\lambda_t-\frac \delta 2 z^2 -[-\psi]_\delta \leq f_t(z,\psi) \leq \lambda_t+\frac \delta 2 z^2 +[\psi]_\delta
	\end{equation}
	for a strictly positive predictable process $\lambda$ such that\, $\text{essup}_\omega \int_0^T \lambda_t \, dt <+ \infty$;
	\item[\sc{A2)}] $f$  verifies
	\begin{equation}\label{A2}
		f_t(z,\psi) - f_t(z,\psi^{\prime}) \leq \zeta_t^{z, \psi, \psi^{\prime}} (\psi -\psi^{\prime}) \nu
	\end{equation}
	where the process $\zeta^{z, \psi, \psi^{\prime}}$ is such that $D_1 \leq \zeta_t^{z, \psi, \psi^{\prime}} \leq D_2 $, with $-1< D_1 \leq 0$ and $D_2\geq 0$.
\end{description}

\noindent In order to check these assumptions we rewrite the driver $f$ in the form
\begin{equation}\label{jumpdriver}
f_t(z,\psi)=\sup_{\pi\in \mathbb{R}}\left\{ \pi \mu_t - [-\psi -\pi \eta_t]_{\delta}\right\} -\frac12 \delta z^2.
\end{equation}

\noindent Since $\pi=0 \in \mathbb{R}$, from \eqref{jumpdriver} we deduce the validity of the left inequality in \eqref{A1}. To prove the right inequality, we consider the function $F(\psi)=f_t(z,\psi) -\frac \delta 2 z^2 -[\psi]_\delta$. Using the explicit form of the driver in \eqref{Yexppj}, we can rewrite $F(\psi)$ as
$$
F(\psi)= \frac \nu \delta \left(\left(\ln \left(1-\frac{\mu}{\eta \nu}\right)  +1 + \delta \psi\right) -e^{\delta \psi}\right), \, \text{  }  
$$
whose maximum is attained at $\psi^*=\frac 1 \delta \ln \left(1-\frac{\mu}{\eta \nu}\right)$ and holds $$F(\psi^*)=2\, \frac \nu \delta \left(1-\frac{\mu}{\eta \nu}\right) \ln \left(1-\frac{\mu}{\eta \nu}\right).$$ We deduce that in \eqref{A1} we can take $\lambda= 2\, \frac \nu \delta \left(1-\frac{\mu}{\eta \nu}\right)\left| \ln \left(1-\frac{\mu}{\eta \nu}\right)\right|$, which is a bounded predictable and strictly positive process, thus the assumption A1) is verified. \\
Using again the explicit form of the driver,  \eqref{A2} can be rewritten in the form
$$
( \psi - \psi^{\prime}) \left( \zeta_t^{z, \psi, \psi^{\prime}} + \frac{\mu}{\eta \nu} \right)\geq 0,
$$ and A2) is easily verified choosing  $\zeta^{z, \psi, \psi^{\prime}}=- \frac{\mu}{\eta \nu} $, $D_1=-1$ and a suitable constant $D_2$, which can be found since $\frac{\mu}{\eta }$ is bounded. \\
\noindent
From the existence of the solution with $Y$ bounded and  $Z, \Psi \in \mathbb{H}^2$ and from \eqref{pi*exppj}, thanks to the standing assumptions on the model, we deduce  that $\pi^*\in \Pi_x$. 
\\ Finally, from the vice versa of Theorem \ref{Theo3pj},  the optimality of $\pi^*$  is guaranteed if we assume that either $\Psi$ or $Z$ is bounded. In fact, (H1) and (H3) trivially hold true,  whereas (H2)  reduces to check that
$\mathbb{E} \left[e^{-2 \delta (X_T^{\pi^*}+H)}\right]$ is finite. \\
First we suppose $\Psi$ bounded, so is $\pi^*$. Taking into account also $H$ boundedness, we get
$$\mathbb{E}\left[ e^{-2 \delta (X_T^{\pi^*}+H)}\right]\leq C_2\mathbb{E}\left[e^{C_1 N_T}\right] <+\infty,
$$
where $C_1, C_2$  are suitable constants whose specific values are irrelevant. \\
Assume now $Z$  bounded. Since the positive martingale $e^{-\delta (X^{\pi^*}+Y)}$ is the Doleans exponential of the martingale $M$ defined in \eqref{hatLpj},  we get
 $\left[\mathcal E_T (M)\right]^2=\mathcal E_T (2M) e^{\langle M\rangle_T}$. Moreover, if $Z$ is bounded then 
 $$\langle M\rangle_T= \int_0^{T} \delta^2 Z_s^2 \, ds + \int_0^{T} \frac{{\mu_s}^2}{\eta_s^2 \nu}\, ds $$
 is bounded.
 Using Novikov condition,  $\mathcal E (2M)$ is a uniformly integrable martingale and, therefore,
$$\mathbb{E}\left[ e^{-2 \delta (X_T^{\pi^*}+H)}\right]= \mathbb{E}\left[ \left[\mathcal E_T (M)\right]^2\right]<+\infty.
$$
\end{proof}

\begin{remark}{ \rm We notice that, in the case $Z$  bounded, the martingality assumption on $e^{-\delta (X^{\pi^*}+Y)}$ can be omitted since it is automatically satisfied.}\end{remark}

\begin{remark}{\rm In \cite{MeSt}, the exponential utility maximization problem is studied within a BSDE framework and in a L\'evy-driven  pure jump asset model. Existence  of optimal strategies are proved using BMO arguments and assumptions on the solutions of the BSDEs involved which imply the boundedness of the strategies. Although our market model is simpler, Proposition \ref{prop_exp}  represents an attempt to establish conditions for the existence of optimal strategies possibly not bounded.
	}
\end{remark}
\bigskip
\noindent

\noindent We conclude the section with an example for  the pure investment problem, where the hypotheses of Proposition \ref{Proppj} are automatically satified.
\begin{proposition}
If $H=0$ and $\frac{\mu}{\eta}$ is deterministic then the strategy 
\begin{equation*}
	\pi^*= - \frac{1 }{\delta \eta}\ln{\left(1-\frac{\mu}{\eta \nu}\right)}
\end{equation*}
is in $\Pi_x$ and it is optimal.
\end{proposition}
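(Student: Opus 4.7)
The plan is to verify that this example falls within the scope of Proposition~\ref{Proppj} and conclude. Since the utility is exponential, $\mathrm{ARA}(x)\equiv\delta$ and hypothesis (H1) holds trivially, so the general strategy in Proposition~\ref{Proppj} takes the form
\[
\pi^*=\frac{1}{\delta}\,\frac{a-\mu/\eta}{\mu-\eta\nu}.
\]
Matching this with the prescribed $\pi^*=-\frac{1}{\delta\eta}\ln\!\left(1-\frac{\mu}{\eta\nu}\right)$ and solving algebraically for $a$ yields
\[
a=\frac{\mu}{\eta}+\nu\left(1-\frac{\mu}{\eta\nu}\right)\ln\!\left(1-\frac{\mu}{\eta\nu}\right).
\]
This reverse-engineering step is really the only nontrivial computation in the argument.

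Next I would check that this $a$ is admissible for Proposition~\ref{Proppj}. Because $\mu/\eta$ is deterministic by assumption and the standing bound $c_1\le \mu/\eta\le c_2<\nu$ keeps $1-\mu/(\eta\nu)$ strictly between two positive constants, the process $a$ is deterministic and bounded; in particular $a\in\mathbb{H}^2$ and $A_0=-\int_0^T a_s\,ds$ is deterministic. I would then verify the martingale hypothesis $U'(X)e^A$ positive martingale: by equation \eqref{Uprimeexp2} in the proof of Proposition~\ref{Proppj}, this process coincides with the Dol\'eans exponential $\mathcal E(-\frac{\mu}{\eta\nu}\cdot n)$, whose jumps $1-\mu/(\eta\nu)$ are strictly positive (so positivity holds) and whose predictable quadratic variation $\int_0^\cdot\frac{\mu_s^2}{\eta_s^2\nu}\,ds$ is bounded, so Novikov's condition delivers the martingale property.

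Finally I would verify (H3) with $\xi=X_T$, i.e.\ $\mathbb{E}[e^{-\delta X_T}]<\infty$. Both $\pi^*$ and $\pi^*\eta=-\frac{1}{\delta}\ln(1-\mu/(\eta\nu))$ are bounded deterministic processes, and writing
\[
X_T=x+\int_0^T \pi^*_s(\mu_s-\eta_s\nu)\,ds+\int_0^T\pi^*_s\eta_s\,dN_s,
\]
one obtains the pointwise bound $-\delta X_T\le C_1+C_2 N_T$ for constants $C_1,C_2$, from which finiteness of $\mathbb{E}[e^{-\delta X_T}]$ follows from the exponential moments of the Poisson process. With all hypotheses of Proposition~\ref{Proppj} now in force, its conclusion yields that $\pi^*$ lies in $\Pi_x$ and is optimal. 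The main obstacle is simply spotting the right $a$; every subsequent check reduces to the model's boundedness hypotheses.
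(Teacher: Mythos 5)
Your overall route is the same as the paper's: reverse-engineer the process $a$ so that the formula $\pi^*=\frac{1}{\delta}\frac{a-\mu/\eta}{\mu-\eta\nu}$ from Proposition~\ref{Proppj} matches the target strategy, check the hypotheses, and invoke that proposition. Your $a$ agrees with the paper's choice \eqref{a}, and your admissibility remarks ($a$ deterministic and bounded, $A_0$ deterministic, the bound on $1-\mu/(\eta\nu)$ coming from $c_1\le\mu/\eta\le c_2<\nu$) as well as your explicit check of (H3) via $-\delta X_T\le C_1+C_2N_T$ are fine -- indeed the last point is spelled out more carefully than in the paper.

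There is, however, one genuine logical gap: your verification of the hypothesis that $U'(X)e^{A}$ is a positive martingale is circular. Equation \eqref{Uprimeexp2} is \emph{derived} in the proof of Proposition~\ref{Proppj} from the assumption that $U'(X)e^{A}$ is a martingale (that assumption is what forces $\gamma=-\frac{\mu}{\eta\nu}e^{A}U'(X_-)$ in \eqref{Uprimeexp1}), so you cannot cite \eqref{Uprimeexp2} to establish the very property it presupposes. What is needed -- and what the paper actually does -- is a direct computation: from the forward equation \eqref{XpjH0} with your $a$, the jump of $X$ at a Poisson time is $\frac{1}{\delta}\frac{a-\mu/\eta}{\mu-\eta\nu}\eta$, so for $U'(x)=\delta e^{-\delta x}$ the jump coefficient in \eqref{Uprimeexp1} is
\begin{equation*}
\gamma=e^{A}U'(X_-)\left(e^{-\frac{(a-\mu/\eta)\eta}{\mu-\eta\nu}}-1\right)
=e^{A}U'(X_-)\left(e^{\ln\left(1-\frac{\mu}{\eta\nu}\right)}-1\right)
=-\frac{\mu}{\eta\nu}\,e^{A}U'(X_-),
\end{equation*}
which makes the drift in \eqref{Uprimeexp1} vanish and identifies $U'(X)e^{A}$ with $\mathcal{E}\bigl(-\frac{\mu}{\eta\nu}\cdot n\bigr)$; positivity and the true-martingale property then follow from $\mu/(\eta\nu)<1$ and boundedness exactly as you say. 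This computation is the heart of the paper's proof (it also shows your $a$ is the \emph{unique} admissible choice), and without it your argument does not close. Once it is inserted, the rest of your proposal is correct.
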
	
\begin{proof}
We choose in Proposition \ref{Proppj},
	\begin{equation}\label{a}
		a=\left(\frac{\mu}{\eta \nu}+\left(1-\frac{\mu}{\eta \nu}\right)\ln{\left(1-\frac{\mu}{\eta \nu}\right)}\right) \nu,
	\end{equation}
which is a positive bounded process by the assumptions on the model.
	Then, the  processes $X$ and $Y$  become 
	 \begin{align*}
		X_t=&x+\int_0^t  \frac{1}{ARA(X_{s^-})} \frac{a_s-{\mu_s}/{\eta_s}} {\mu_s-\eta_s \nu} \left(\mu_s ds+ \eta_s dn_s \right) \\
		=&x+\int_0^t  \left(- \frac{1 }{\delta \eta_s}\right)\ln{\left(1-\frac{\mu_s}{\eta_s \nu}\right)}  \left(\mu_s ds+ \eta_s dn_s \right), \\[10pt]
		Y_t=& U'^{-1} \left(U'(X_{t})e^{A_t}\right) -X_t= -\frac 1 \delta A_t= \frac 1 \delta \int_t^T a_s ds.
	\end{align*}

\noindent On the other hand, we can observe that \eqref{a} represents the unique choice for $a$ in Proposition \ref{Proppj} which makes $U'(X)e^{A}$ a martingale. In fact, $U'(X)e^{A}$ is a martingale if and only if
\begin{align*}
	e^{A}\left( U'(X)-U'(X_{-}  \right)= - \frac{\mu}{\eta \nu} e^{A} U'(X_{{-}}) \Delta N,
\end{align*}
i.e.
\begin{align*}
	\displaystyle \delta e^{-\delta X_{-}+A}\left( e^{-\delta   \left( \frac{1 }{\delta}\frac{a-{\mu}/{\eta}} {\mu-\eta \nu} \eta \right)} -1 \right)= - \frac{\mu}{\eta \nu} \delta e^{-\delta X_{-}+A},
\end{align*}
from which we deduce $\frac{\mu-\eta a} {\mu-\eta \nu}=\ln{\left(1- \frac{\mu}{\eta \nu} \right)}$ and thus \eqref{a}. The positivity of $U'(X)e^{A}$ immediately follows from the conditions on the model.  

\noindent
Since all the assumptions of Proposition \ref{Proppj} are satisfied, the processes $X$ and $Y$
give a  solution to the forward backward system (\ref{Xpj} - \ref{Ypj}) with $Z=0$ and
\begin{align*}
	\Psi&= \frac{1}{ARA(X_{-})} \, \frac{\mu -\eta a}{\mu-\eta \nu}+U'^{-1} \left(U'(X_{-})e^{A}\left(1-\frac{\mu}{\eta \nu}\right)\right)-U'^{-1} \left(U'(X_{-})e^{A}\right) \\
	&= \frac 1 \delta \ln{\left(1-\frac{\mu}{\eta \nu}\right)} -\frac 1 \delta   \ln{\left(e^{-\delta X_{-}+A} \left(1-\frac{\mu}{\eta \nu}\right)  \right) }+\frac 1 \delta   \ln{\left(e^{-\delta X_{-}+A}  \right) } =0.
\end{align*}
Finally, the  strategy is
\begin{equation*}
	\pi^*= \frac{1}{ARA(X_{-})} \frac{a-{\mu}/{\eta}} {\mu-\eta \nu}= - \frac{1 }{\delta \eta}\ln{\left(1-\frac{\mu}{\eta \nu}\right)}.
\end{equation*}
\end{proof}

\begin{remark}{\rm We notice that in the proposition above the hypothesis on $\frac{\mu}{\eta}$  can be weakened by assuming deterministic $\int_0^T a_s ds$, with the choice of $a$ in \eqref{a}. }
\end{remark}

\noindent

\section{Appendix}
This result is a generalization of Lemma 1 in \cite{GW}.

\begin{theorem}
	Let $A$ be an open subset of ${\mathbb R^n}$.
	
	\noindent
	Let us consider a function $F\in C^1(A\times {\mathbb R};{\mathbb R})$ and suppose there exists a function $g:A\rightarrow {\mathbb R}$ such that one of the following conditions holds:
	\begin{enumerate}
	\item
	$\displaystyle\frac{\partial}{\partial y} F({\mathbf x},y)> g({\mathbf x})>0$,  $\forall ({\mathbf x},y)\in A\times {\mathbb R}$,
	\item
    $\displaystyle\frac{\partial}{\partial y} F({\mathbf x},y) < g({\mathbf x})<0 \ \forall ({\mathbf x},y)\in A\times {\mathbb R}$.
    \end{enumerate}
    Then there exists a function $G\in C^1(A; {\mathbb R})$ such that $F({\mathbf x}, G({\mathbf x}))=0$, $\forall {\mathbf x}\in A$.
	\end{theorem}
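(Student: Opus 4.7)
The plan is to combine a global existence argument (via strict monotonicity and coercivity in $y$) with the classical local implicit function theorem plus a uniqueness gluing argument to obtain $C^1$ regularity. I work out condition (1); condition (2) is symmetric.

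First, for each fixed $\mathbf{x}\in A$, the function $y\mapsto F(\mathbf{x},y)$ is strictly increasing since $\partial F/\partial y>g(\mathbf{x})>0$. Moreover, by the fundamental theorem of calculus, for any $y>0$,
\begin{equation*}
F(\mathbf{x},y)=F(\mathbf{x},0)+\int_0^y\frac{\partial F}{\partial y}(\mathbf{x},s)\,ds\ge F(\mathbf{x},0)+g(\mathbf{x})\,y,
\end{equation*}
so $F(\mathbf{x},y)\to+\infty$ as $y\to+\infty$, and analogously $F(\mathbf{x},y)\to-\infty$ as $y\to-\infty$. By continuity in $y$ and the intermediate value theorem, there exists a unique $y=:G(\mathbf{x})\in\mathbb{R}$ with $F(\mathbf{x},G(\mathbf{x}))=0$. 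This defines a function $G:A\to\mathbb{R}$.

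It remains to show $G\in C^1(A;\mathbb{R})$. Fix any $\mathbf{x}_0\in A$ and set $y_0=G(\mathbf{x}_0)$. Since $F\in C^1(A\times\mathbb{R};\mathbb{R})$ and $\frac{\partial F}{\partial y}(\mathbf{x}_0,y_0)\ge g(\mathbf{x}_0)>0\neq 0$, the classical implicit function theorem yields an open neighborhood $U\subset A$ of $\mathbf{x}_0$, an open interval $I\ni y_0$, and a function $\widetilde G\in C^1(U;I)$ such that on $U\times I$ the equation $F(\mathbf{x},y)=0$ is equivalent to $y=\widetilde G(\mathbf{x})$. By the uniqueness of the global zero in $y$ established above, $G(\mathbf{x})=\widetilde G(\mathbf{x})$ on $U$ (after possibly shrinking $U$ so that $G(\mathbf{x})\in I$, which is possible by continuity of $\widetilde G$ at $\mathbf{x}_0$). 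Hence $G$ coincides with a $C^1$ function in a neighborhood of each $\mathbf{x}_0\in A$, so $G\in C^1(A;\mathbb{R})$.

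There is no real obstacle here: the only subtlety is the gluing step in the last paragraph, where one must verify that the local branch produced by the classical implicit function theorem agrees with the globally defined $G$, and that this can be done uniformly so as to ensure $G(\mathbf{x})$ lies in the interval $I$ where uniqueness applies; both follow at once from strict monotonicity of $F(\mathbf{x},\cdot)$ and continuity.
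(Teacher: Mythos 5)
Your proof is correct and follows essentially the same strategy as the paper: global existence and uniqueness of the zero from the uniform-in-$y$ lower bound $g(\mathbf{x})>0$ on $\partial F/\partial y$ (the paper localizes the sign change at $\pm|F(\mathbf{x},0)|/g(\mathbf{x})$ via the mean value theorem, while you derive coercivity from the fundamental theorem of calculus --- the same idea), followed by the identical gluing of the local implicit function theorem branches with the global zero via strict monotonicity. No gaps.
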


\begin{proof}
	We start by proving that for any $\mathbf x \in A$ there exists  $y(\mathbf x)\in \mathbb R$ such that $F({\mathbf x}, y({\mathbf x}))=0$.\\
	In order to do this, let us suppose that condition 1 holds. We can proceed in the same way if condition 2 is assumed.\\
	Let us fix  $\mathbf x \in A$. If $F({\mathbf x}, 0)=0$, there is nothing to check. Otherwise, by Lagrange mean value theorem,  we have
	
$$
F\left({\mathbf x}, \frac{|F({\mathbf x}, 0)|}{g({\mathbf x})}\right)=F({\mathbf x}, 0)+\frac{\partial}{\partial y}F({\mathbf x},y)\Big|_{y=y_1({\mathbf x})}  \frac{|F({\mathbf x}, 0)|}{g({\mathbf x})}> F({\mathbf x}, 0)+|F({\mathbf x}, 0)|\geq 0
$$
and
$$
F\left({\mathbf x}, -\frac{|F({\mathbf x}, 0)|}{g({\mathbf x})}\right)=F({\mathbf x}, 0)-\frac{\partial}{\partial y}F({\mathbf x},y)\Big|_{y=y_2({\mathbf x})} \frac{|F({\mathbf x}, 0)|}{g({\mathbf x})}<F({\mathbf x}, 0)+|F({\mathbf x}, 0)|\leq 0,
$$
where 	$ 0<y_1({\mathbf x})<\frac{|F({\mathbf x}, 0)|}{g({\mathbf x})}$ and
	$-\frac{|F({\mathbf x}, 0)|}{g({\mathbf x})}<y_2({\mathbf x})<0$.
Then, by the intermediate value theorem, there exists $-\frac{|F({\mathbf x}, 0)|}{g({\mathbf x})}<y({\mathbf x})< \frac{|F({\mathbf x}, 0)|}{g({\mathbf x})}$ such that $F({\mathbf x}, y({\mathbf x}))=0$.\\
Let us observe that, since $F({\mathbf x}, \cdot)$ is strictly monotone, $y({\mathbf x})$ is univocally determined.\\
We are left with the task of proving that the function $G$ defined by $G({\mathbf x})=y({\mathbf x})$ is continuously differentiable on $A$.
Let us fix ${\mathbf x_0}\in A$. Since $F({\mathbf x_0}, G({\mathbf x_0}))=0$, by the implicit function theorem we can find a neighborhood $U_{\mathbf x_0}$ of ${\mathbf x_0}$ and a function $\varphi_{\mathbf x_0}\in C^1(U_{\mathbf x_0}; \mathbb R)$ such that $F({\mathbf x}, \varphi_{\mathbf x_0}({\mathbf x}))=0$, for any ${\mathbf x}\in U_{\mathbf x_0}$.
Considering that also F$({\mathbf x}, G({\mathbf x}))=0$ and
 $F$ is strictly monotone, we deduce that $G= \varphi_{\mathbf x_0}$, therefore continuously differentiable,  on $U_{\mathbf x_0}$.
Due to the arbitrariness of ${\mathbf x_0}$ we can then conclude that $G\in C^1(A; {\mathbb R})$ and get the thesis.
	\end{proof}

\end{document}